\documentclass[journal]{IEEEtran}
\usepackage[T1]{fontenc}
\usepackage{cite}
\ifCLASSINFOpdf
  \usepackage[pdftex]{graphicx}
\else
  \usepackage[dvips]{graphicx}
\fi
\usepackage{amssymb}
\usepackage{amsthm}
\newtheorem{remark}{Remark}

\usepackage{algorithm}
\usepackage{algorithmic}
\usepackage{tabularx}
\usepackage{array} 
\usepackage{amsthm,amssymb,amsfonts}
\usepackage{epsfig}
\usepackage{subfigure}
\usepackage{array}
\usepackage{changes}
\usepackage{textcomp}
\usepackage{xcolor}
\usepackage{amsmath}
\usepackage{fancyhdr}

\usepackage{dsfont}
\usepackage{amsmath,amssymb,amsfonts}
\usepackage{subfigure}
\usepackage{textcomp}
\usepackage{xcolor}
\usepackage{multirow}
\usepackage{mathtools}
\usepackage{mhchem}
\usepackage{tensor}
\usepackage{graphicx}  
\usepackage{psfrag}    
\usepackage{amsthm}
\usepackage{mathrsfs}
\usepackage{epstopdf}
\usepackage{bm}
\usepackage{cuted}
\usepackage{color, soul, framed}
\usepackage{amsthm}

\newtheorem{proposition}{Proposition}
\theoremstyle{remark}
\usepackage{xcolor}
\usepackage{xpatch}
\usepackage{tabularx}
\ifCLASSOPTIONcompsoc
  \usepackage[caption=false,font=normalsize,labelfont=sf,textfont=sf]{subfig}
\else
  \usepackage[caption=false,font=footnotesize]{subfig}
\fi
\usepackage{fixltx2e}
\usepackage{dblfloatfix}

\usepackage{multirow}
\usepackage{booktabs}
\usepackage{url}
\makeatletter
\def\changeBibColor#1{%
	
	\in@{#1}{}
	
	\ifin@\color{blue}\else\normalcolor\fi
	
}

\xpatchcmd\@bibitem
{\item}
{\changeBibColor{#1}\item}
{}{\fail}
\xpatchcmd\@lbibitem
{\item}
{\changeBibColor{#2}\item}
{}{\fail}
\makeatother

\begin{document}

\title{Joint 3D Trajectory and Power Allocation for HAPs–UAV Bistatic ISARAC in Low-Altitude Networks
}

\author{Bang~Huang,~\IEEEmembership{Member,~IEEE,}	
	Mohamed-Slim Alouini,~\IEEEmembership{Fellow,~IEEE,}
	\thanks{ The authors are with the Computer, Electrical and Mathematical Science and Engineering (CEMSE) division in King Abdullah University of Science and Technology (KAUST), Thuwal 6900, Makkah Province, Saudi Arabia.  (Emails: bang.huang@kaust.edu.sa; slim.alouini@kaust.edu.sa;) (Corresponding author: Bang Huang)}}
\maketitle

\begin{abstract}
This paper investigates joint three-dimensional (3D) trajectory planning and 
resource allocation for a high-altitude platform (HAPs)–unmanned aerial vehicle 
(UAV) bistatic integrated synthetic aperture radar (SAR) and communication 
(ISARAC) system in low-altitude networks. In the proposed architecture, the HAPs 
provides persistent wide-area connectivity by transmitting ISARAC waveforms for 
ground-user communications, while a low-altitude UAV exploits its proximity and 
mobility to passively collect ground-target echoes for high-resolution SAR imaging.
We formulate a sum-rate maximization problem for ground users subject to stringent 
SAR imaging signal-to-noise ratio (SNR) and resolution requirements, a total 
energy budget for ISARAC transmission, and UAV dynamic constraints. The resulting 
problem is inherently nonconvex. To tackle it, an alternating optimization (AO) 
framework is developed, where the power-allocation subproblem with fixed UAV 
states admits a closed-form water-filling solution, while the UAV trajectory 
optimization with fixed transmit powers is handled via successive convex 
approximation (SCA) and difference-of-convex (DC) programming.
Simulation results verify the effectiveness of the proposed approach and 
demonstrate its capability to jointly support persistent communication coverage 
and high-resolution sensing in low-altitude network scenarios.

\end{abstract}

\begin{IEEEkeywords}
Alternating optimization (AO), Bistatic SAR, communication, high-altitude platform (HAPs), unmanned aerial vehicle (UAV), trajectory.
\end{IEEEkeywords}

\IEEEpeerreviewmaketitle
\section{Introduction}

\IEEEPARstart{B}{ridging} the digital divide has become a central vision for beyond-5G (B5G) and 6G networks, driving a paradigm shift from town-centric infrastructure toward human-centric connectivity that ensures universal access regardless of location \cite{belmekkiCellularNetworkSky2024}. Non-terrestrial networks (NTNs) \cite{rago2025innovative} are widely recognized as a key enabler of this vision, offering cost-effective and flexible coverage beyond the reach of terrestrial base stations. Among various NTN platforms, high-altitude platforms (HAPs) \cite{abbasi2024haps} and unmanned aerial vehicles (UAVs) \cite{Zeng2016Wirelesscommunications} have attracted growing attention due to their complementary characteristics, with HAPs providing wide-area and long-endurance coverage, while UAVs offering agile mobility and rapid deployment.

Beyond connectivity, 6G is envisioned to natively integrate environmental sensing, 
giving rise to the paradigm of integrated sensing and communications (ISAC) \cite{WuJiang2024AIEnhancedIntegrated}. 
Most existing ISAC studies, however, primarily focus on conventional radar 
functionalities, such as beamforming and target parameter estimation, which 
provide limited and often abstract sensing outputs.
In contrast, synthetic aperture radar (SAR) 
\cite{JirousekPeichl2024DLR,jirousek2023design} is capable of delivering 
interpretable, image-based situational awareness, offering substantially greater 
practical value from the user perspective. SAR imagery enables a direct and 
comprehensive understanding of the environment, and, unlike optical sensors, SAR 
operates robustly under all-weather and day-and-night conditions 
\cite{DengYu2024TheHighResolutionSynthetic}.
Motivated by these advantages, the integration of SAR into ISAC systems, referred 
to as integrated SAR and communication (ISARAC), has recently emerged as a 
promising research direction, particularly for mission-critical applications 
where intuitive and reliable environmental perception is essential \cite{lahmeri2022trajectory,huang2025joint}.

Despite these advances, most existing ISARAC studies remain confined to {single-platform} designs, where sensing and communication functions are tightly coupled to the same platform geometry, altitude, and energy budget. In contrast, bistatic or multistatic ISARAC architectures introduce geometric diversity and complementary viewpoints, which fundamentally expand the feasible sensing--communication tradeoff region. By decoupling sensing and communication roles across heterogeneous platforms, bistatic ISARAC systems can enable joint functionalities that are structurally difficult to realize using a single platform alone.

\subsection{Related works}
Extensive studies have examined the use of UAVs and HAPs in wireless communications 
as key enablers of NTNs. For UAVs, existing research can 
be broadly categorized into two paradigms. The first paradigm is UAV-assisted 
communications, where UAVs act as aerial base stations or relays to support 
terrestrial networks, enabling rapid connectivity restoration and traffic 
offloading \cite{Mozaffari2016UnmannedAerialVehicle,Lyu2017PlacementOptimization}. 
The second paradigm is cellular-enabled UAV communications, where UAVs operate as 
aerial users establishing bidirectional links with terrestrial networks to satisfy 
their own communication demands \cite{wu2019fundamental,wu2018capacity}. These 
advances have highlighted UAVs as a key technological pillar of the emerging 
low-altitude economy (LAE) \cite{jiang2025integrated}.

In comparison, HAPs are less maneuverable but provide much broader coverage and 
significantly longer endurance. Prior works have investigated integrated 
satellite--air--ground architectures \cite{Wu2025MultiHAP,Ali2024PowerLEO}, secure 
communication schemes, and multiple-access techniques such as non-orthogonal 
multiple access (NOMA) tailored for HAPs-based systems \cite{Javed2025SystemDesign}, 
confirming the critical role of HAPs in future NTN deployments 
\cite{Nabi2025JointOffloading,Yu2024MECNOMA}. From a resource management perspective, 
dynamic resource allocation and scheduling in HAPs-assisted networks have also 
received growing attention. For example, Kawamoto \emph{et al.} 
\cite{kawamoto2023traffic} proposed a traffic-prediction-based dynamic resource control 
strategy for HAPS-mounted mobile edge computing (MEC)-assisted satellite communication systems, while 
Dahrouj \emph{et al.} \cite{dahrouj2023machine} investigated machine 
learning-based user scheduling in integrated satellite--HAPs--ground networks, 
demonstrating the potential of data-driven approaches for large-scale NTN resource 
management.

Recent studies have increasingly explored ISAC on aerial platforms, 
particularly UAVs and HAPs, through joint design of mobility, power allocation, 
and beamforming to balance communication and sensing objectives 
\cite{WuJiang2024AIEnhancedIntegrated}.
 On the 
UAV side, extensive studies have addressed the joint design of trajectory, power 
allocation, and beamforming to balance communication and sensing requirements 
\cite{lyu2022joint,liu2024uav,liu2024secure,wang2025joint,mao2024uav}. Security-aware 
ISAC designs \cite{liu2024secure}, multi-antenna beamforming schemes 
\cite{wang2025joint,mao2024uav}, cooperative multi-UAV systems 
\cite{zhang2024joint,bayessa2024joint}, and learning-enabled UAV ISAC frameworks 
\cite{Yan2025uav} have further demonstrated the flexibility and task-adaptive 
capability of UAV-based ISAC systems.

Complementing these efforts, HAPs-based ISAC research has also gained momentum by 
leveraging the wide-area coverage and long endurance of HAPs. Representative works 
include HAPs-assisted ISAC frameworks with cooperative UAV jamming 
\cite{benaya2025aerial}, ISAC-assisted channel estimation and beam alignment for 
inter-HAPs networking \cite{kirik2025isac}, and advanced beamforming designs that 
jointly optimize sensing accuracy and communication signal to interference plus noise ratio (SINR) under stratospheric 
conditions \cite{kanani2025haps}. From a communication modeling perspective, 
practical issues such as interference management, atmospheric attenuation, and 
channel uncertainty in HAP- and UAV-assisted networks have also been extensively 
investigated 
\cite{kawamoto2023hapsbased,li2024uav,abbasi2024haps,kawamoto2024interference,
rodrigues2025weather,zedini2024improving}, providing valuable insights into realistic 
channel modeling and interference suppression in space--air--ground integrated 
networks.

More recently, ISARAC has attracted growing attention, where SAR capabilities are integrated into ISAC frameworks to enable image-based situational awareness under all-weather and day-and-night conditions. From both UAV and HAPs 
perspectives, recent studies have investigated real-time SAR imaging and trajectory 
optimization under energy constraints 
\cite{lahmeri2022trajectory,huang2025joint}. Stefano \emph{et al.} 
\cite{moro2024exploring} explored ISAC-oriented signal processing in a UAV--SAR 
setting, while Huang \emph{et al.} \cite{huang2025design} and Zhang \emph{et al.} 
\cite{zhang2025design} studied waveform and joint beamforming--trajectory design 
for HAPs-enabled ISARAC systems. From the UAV perspective, trajectory planning and 
energy-aware designs for ISARAC have also been investigated 
\cite{hu2022trajectory,Zhou2025JointTrajectory,zhou2025energy}.

Despite the extensive efforts reviewed above, existing studies predominantly 
focus on single-platform ISAC or ISARAC designs, while systematic investigations 
of bistatic or multistatic ISARAC architectures remain scarce. In particular, 
the joint system modeling and optimization of heterogeneous HAPs--UAV platforms 
for integrated communication and SAR-based sensing have not been fully explored 
in the current literature. Moreover, this paradigm naturally aligns with the evolution of 6G NTNs and space–air–ground integrated networks (SAGIN), underscoring its research significance and practical value. 

\subsection{Contribution}

Motivated by this gap, this paper investigates a HAPs–UAV cooperative bistatic ISARAC system in low-altitude networks, where HAPs provide persistent wide-area connectivity and UAVs operate at low altitude to enable high-resolution SAR sensing, through system modeling and joint optimization. These advantages make bistatic ISARAC particularly well-suited for mission-critical scenarios such as disaster emergency response. In major natural disasters including floods or earthquakes, terrestrial communication infrastructure is often damaged and unable to promptly serve affected users. In such cases, a HAPs can loiter over the disaster zone on a stable circular trajectory, providing continuous coverage and reliable backhaul via directional or multibeam transmission. Meanwhile, the rescue command center requires spatiotemporal situational awareness including population distribution, transportation blockages, and secondary hazard risks, to guide personnel deployment and resource allocation. To this end, we envision a HAPs–UAV bistatic ISARAC concept, where the HAPs transmits ISARAC waveforms and a UAV, exploiting its mobility and low-altitude proximity, passively receives echoes to perform SAR imaging, change detection, and target recognition. The UAV then forwards results to the command center, closing the decision loop. This architecture ensures both communication continuity and high-resolution sensing, offering strong survivability, rapid deployment, and persistent coverage with clear practical significance.

Based on the above envision, the main contributions of this work are summarized as follows:
\begin{itemize}
    \item To the best of the authors’ knowledge, this is the first exploration of a HAPs–UAV cooperative bistatic ISARAC system. Specifically, we establish a signal model for the coupled communication--sensing chain of the considered HAPs--UAV ISARAC system. 
In this chain, the HAPs transmits ISARAC waveforms, ground users access the downlink communication service, and the UAV passively receives ground-target echoes for SAR imaging. 
This model provides an analytically tractable physical and geometric foundation for subsequent joint optimization.
    \item We develop an optimization framework that aims to maximize the sum rate of ground users while explicitly incorporating key engineering constraints, including SAR imaging signal-to-noise ratio (SNR) and resolution requirements, total HAPs energy budget for ISARAC tasks, and UAV dynamics/kinematics. This framework quantitatively reveals the strict lower-bound effect imposed by imaging thresholds on communication degrees of freedom, as well as the inherent structural trade-off between communication and sensing.
    \item To address the resulting non-convex problem, we propose a low-complexity alternating optimization (AO) framework. When the UAV trajectory/velocity is fixed, the power allocation subproblem admits a closed-form water-filling solution. Next when the transmit power is fixed, the UAV's trajectory–velocity subproblem is convexified and reconstructed via successive convex approximation (SCA) and
difference-of-convex (DC) decomposition, enabling efficient convex optimization.
\item Simulation results first verify the convergence of the proposed algorithm. 
Moreover, sensitivity analyses under multiple parameter settings characterize how system performance varies with UAV receive SNR, HAPs transmit power, and sensing slot/dwell time. 
The results show that, under limited energy and stringent imaging thresholds, the proposed framework achieves more favorable energy allocation and UAV trajectory--geometry configuration, offering practical guidelines for parameter selection in real deployments. 
In addition, comparison with a single-platform HAPs-based ISARAC benchmark demonstrates that the proposed HAPs--UAV bistatic architecture provides a wider feasible operating region by reducing the sensing-induced energy floor under stringent SAR SNR constraints.
\item The proposed model and algorithm directly match the dual requirements of persistent coverage and high-resolution sensing in scenarios such as disaster emergency response. HAPs provide steady coverage and backhaul, while UAVs, leveraging their proximity and maneuverability, perform SAR imaging and change detection.
\end{itemize}
\subsection{Outline}
The remainder of this paper is organized as follows. Section~\ref{sec2} presents the system model of the bistatic HAPs--UAV ISARAC framework. Section~\ref{sec3} formulates and solves the joint optimization problem. Simulation results are provided in Section~\ref{sec4}, followed by conclusions in Section~\ref{sec5}.

\section{System model}
\label{sec2}
\begin{figure}[htp]
	\centering
	{\includegraphics[width=0.35\textwidth]{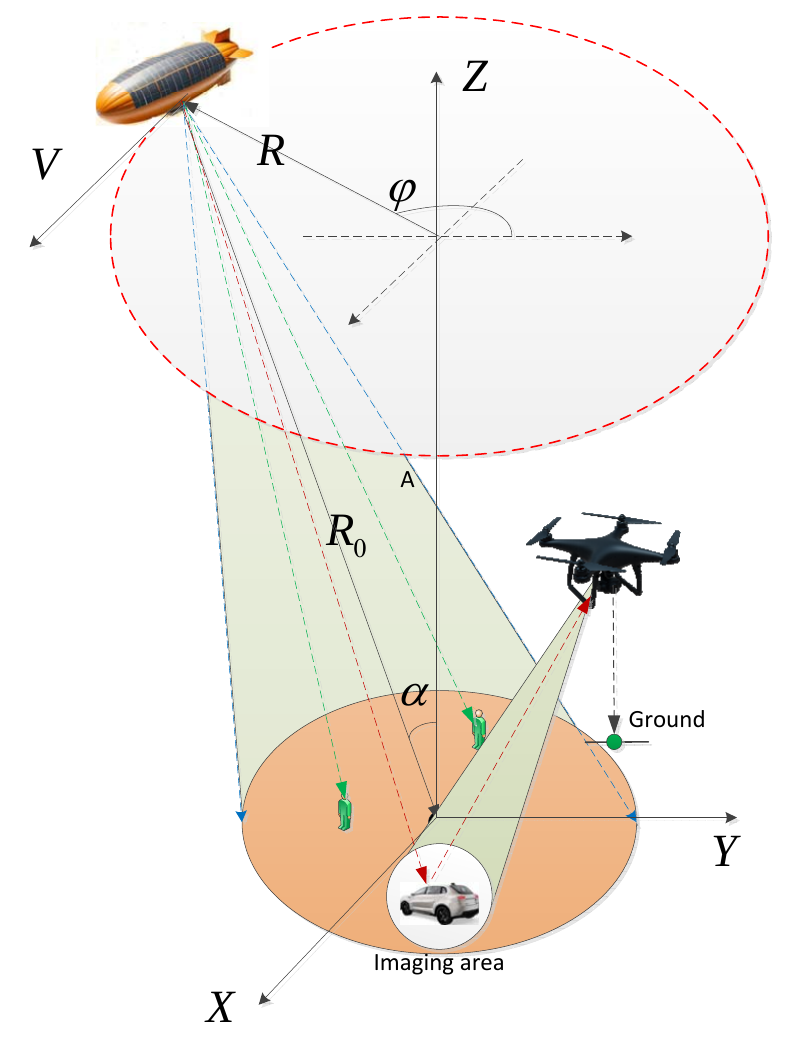}}
	\caption{The signal model for ISARAC in the context of a HAPS-UAV bistatic SAR system.}
	\label{fig1}	
\end{figure}
This section begins with an overall analysis of the HAPs–UAV bistatic ISARAC system architecture, highlighting its main components and their interaction mechanisms. Building upon this foundation, we then examine the system in detail from four perspectives, namely the UAV subsystem, the HAP subsystem, the SAR functionality, and the communication link.

Fig.\ref{fig1} illustrates the signal model of an ISARAC system based on a HAPs-UAV bistatic SAR architecture. In this system, a HAPs continuously transmits a downlink ISARAC waveform toward the ground. This waveform is carefully co-designed to simultaneously support both communication and radar sensing functionalities \cite{huang2025design}.
On the communication side, ground users directly receive the downlink signal from the HAPs and decode the embedded data, enabling wide-area coverage in non-terrestrial environments. Meanwhile, part of the transmitted signal is reflected by natural ground targets or environmental features, such as buildings, terrain, and vegetation, and these echoes are captured by a low-altitude UAV. Acting as a passive receiver, the UAV leverages the reflected signals to perform SAR imaging, thereby achieving high-resolution sensing and dynamic monitoring of the ground scene. Thanks to the spatial separation between the HAPS and UAV, this system forms a typical bistatic SAR geometry. By exploiting the UAV’s high mobility, rapid imaging can be performed over the HAPs-illuminated region, which is particularly valuable for emergency remote sensing applications such as post-earthquake damage assessment. Furthermore, since the UAV only receives echoes and does not transmit, its power consumption is significantly reduced, enabling long-duration and low-power imaging operations.

 \subsection{UAV analysis}
Assume the UAV operates over a synthetic aperture duration of $T$ seconds, during which azimuth sampling yields a total of $N_a$ discrete time slots, indexed by $n = 1, 2, \ldots, N_a$, with a uniform sampling interval $\delta$. Let $
\mathbf{q}_n=\left[ x_n,y_n,z_n \right] ^T\in \mathbb{R} ^3,\forall n
$ denote the UAV’s 3D position at time slot $n$, where the initial position is given by $
\mathbf{q}_0=\left[ x_0,y_0,z_0 \right] ^T
$. Note that $(\cdot)^T$ denotes the transpose operator of a matrix or vector.
The UAV's instantaneous velocity at time slot $n$ is represented by $
V_n
$.
 Consequently, the UAV’s mobility constraint can be formulated as
\begin{equation}
\label{eq1}
\parallel \mathbf{q}_{n}-\mathbf{q}_{n-1}\parallel =V_n\delta \le V\delta ,\quad \forall n,
\end{equation}
where $\|\cdot\|$ denotes the norm of a vector.
\eqref{eq1} ensures that the UAV's displacement between adjacent time slots does not exceed the distance permitted by its speed limit $V$.

Although the UAV operates as a passive receiver, a directional antenna can be employed to maximize the reception of reflected signals. Under this configuration, the ground location corresponding to the center of the UAV's antenna main lobe at time slot $n$ can be expressed as
\begin{equation}
    \label{}
\mathbf{q}_{n}^{c}=\left[ x_{n}^{c},y_{n}^{c},0 \right] ^T
\end{equation}
with
\begin{equation}
    \label{eq3}
x_{n}^{c}=x_n+z_n\tan \alpha \sin \gamma,\quad
y_{n}^{c}=y_n-z_n\tan \alpha \cos \gamma
\end{equation}
where  $
\gamma\in \left( 0,2\pi \right] 
$ represents the angle between the UAV’s flight direction and the horizontal $x$-axis. Besides, $\alpha\in \left( 0,\frac{\pi}{2} \right) $ represents the SAR observation angle, which is considered time-invariant during the entire synthetic aperture duration.

For a rotary-wing UAV, the propulsion power consumption at time $n$ when flying at velocity $V_n$ is expressed as \cite{zeng2019energy}
\begin{equation}
    \label{eq4}
    \begin{split}
      P_n=&P_0\left( 1+\frac{3V_{n}^{2}}{u_{\mathrm{tip}}^{2}} \right)    +\frac{1}{2}d_0\varrho sAV_{n}^{3} 
\\
\,\,   &\quad\quad\quad\quad\quad\quad+P_1\left( \sqrt{1+\frac{V_{n}^{4}}{4v_{0}^{4}}}-\frac{V_{n}^{2}}{2v_{0}^{2}} \right) ^{\frac{1}{2}}
    \end{split}
\end{equation}
where 
$P_0$ and $P_1$ represent the fixed blade profile power and induced power, respectively, required for the aircraft to maintain a hovering state.
$u_{\mathrm{tip}}$ denotes the rotor tip speed, while $v_0$ corresponds to the mean induced velocity during hover.
The parameters $d_0$ and $\varrho$ represent the fuselage drag coefficient and air density, respectively.
Meanwhile, $s$ and $A$ denote the rotor solidity and the rotor disc area, respectively.

\subsection{HAPs analysis}
Assuming the HAPs maintains a circular trajectory with a constant altitude $H$ and a uniform linear velocity $V_h$, its position at time slot $n$ is given by
\begin{equation}
    \label{}
    \begin{split}
     \boldsymbol{p}_n=&\left[ x_{p}^{n},y_{p}^{n},H \right] ^T
\\
     =&\left[ r\cos \left( \theta _0+\frac{n\delta V_h}{r} \right) ,r\sin \left( \theta _0+\frac{n\delta V_h}{r} \right) ,H \right] ^T
    \end{split}
\end{equation}
where $r$ denotes the radius and $\theta _0$ is the initial angle. In this respect, the grazing angle $\beta _n$ of the transmitted beam can be written as 
\begin{equation}
    \label{}
\cos \beta _n=\frac{\sqrt{\left\| x_{p}^{n}-x_{c}^{n} \right\| ^2+\left\| y_{p}^{n}-y_{c}^{n} \right\| ^2}}{\left\| \boldsymbol{p}_n-\mathbf{q}_{c}^{n} \right\|}
\end{equation}

Unlike UAV, HAPs are typically equipped with solar panels, enabling them to harvest solar energy to sustain their flight trajectory and support sensing and communication tasks. 
Therefore, in this paper, the propulsion-related energy consumption of HAPs, 
including platform flight maintenance, attitude control, and thermal management, 
is not explicitly modeled. 
This modeling choice is primarily made to maintain analytical tractability and to 
enable a clean mathematical formulation of the joint optimization problem. While 
propulsion- and platform-related energy expenditures are essential for practical 
HAPs operation, explicitly modeling them would introduce additional system-specific 
parameters and constraints that are beyond the scope of this study. 
By focusing on the ISARAC transmission energy, the proposed framework allows for a 
transparent and tractable analysis of the key design tradeoffs of interest. For 
more comprehensive HAPs energy consumption models that explicitly account for 
propulsion-related power as well as battery charging and discharging processes, 
interested readers are referred to \cite{huang2025joint,javed2023interdisciplinary}.


\subsection{SAR analysis}
To generate a high-quality bistatic SAR image using the HAPs-UAV system, it is crucial to ensure that the received echo signal at the UAV satisfies a minimum SNR requirement. It is important to emphasize that, unlike in communication systems or other radar modalities, metrics such as the signal-to-clutter ratio (SCR) or signal-to-interference-plus-noise ratio (SINR) are not considered here. This is because, in the context of SAR imaging, clutter and certain types of interference may actually contribute positively to the image formation process rather than degrading it. Certainly, ensuring a sufficient SNR at the receiver is only the initial step. To obtain the final SAR image, dedicated imaging algorithms must be employed to perform pulse compression along both the fast-time and slow-time dimensions \cite{cumming2005digital}. However, the detailed imaging procedure lies beyond the scope of this work.

According to the radar equation \cite{richards2005fundamentals}, it is easy to derive the expression of SNR at UAV receiver \cite{rodriguez2009bistatic}, as 
\begin{equation}
    \label{eq9}
\mathrm{SNR}\left( n \right) =\frac{P_t(n)G_tG_r\lambda ^3\sigma _0A_p}{\left( 4\pi \right) ^3\left\| \boldsymbol{p}_n-\mathbf{q}_{c}^{n} \right\| ^2\left\| \mathbf{q}_n-\mathbf{q}_{c}^{n} \right\| ^2KT_bF_nB_rL_s}
\end{equation}
where all symbols are given the meaning in the table \ref{tab1}.
\begin{table}[htp]
	\centering
	\caption{Symbols  used in \eqref{eq9}}
	\label{tab1}
	\begin{tabular}{c|p{3cm}cc}
		\hline
		{Symbol} &{Explanation} & Symbol & Explanation
		\\	\cline{1-4}	
		$P_t(n)$	& transmit power in slot $n$ &$G_t$ &	transmit gain for HAPs
        		\\	
		$G_r$	& receive gain for UAV &$\lambda$ &	wavelength
		\\
        $\sigma _0$ & radar cross section (RCS) &$L_s$ &loss factor for system 
        \\
        $K$ &Boltzmann’s constant &$T_b$ & receiver temperature\\
        $F_n$ &receiver noise figure &$B_r$& receiver's bandwidth
        \\
        $A_p$ &processing gain related to pulse compression and coherent time&&\\
		\hline
	\end{tabular}
	\label{tab1}
\end{table}

Furthermore,  this paper also uses spatial resolution serving as a critical indicator for evaluating the impact of different UAV trajectory designs on sensing accuracy. In this respect, the expressions for range resolution $
\delta _{n}^{r}
$ and azimuth resolution $
\delta _{n}^{a}
$ at time slot $n$ for HAPs-UAV bistatic SAR are given by \cite{moccia2011spatial,dower2018bistatic}
\begin{align}
\delta _{n}^{r}=&\frac{c}{B\left( \sin \alpha +\cos \beta _n \right)}
\\
\delta _{n}^{a}=&\frac{\lambda z_n}{T_wV_n\cos \alpha}
\end{align}
where the symbol $c$ denotes the light speed and $B$ is the bandwidth of transmitting signal from HAPs. $\lambda$  and $T_w$ denote the wavelength and the coherent processing interval (CPI) of the SAR signal, corresponding to one synthetic aperture duration, respectively.

The resolution of SAR is defined as the minimum distance at which two adjacent targets can be distinguished in a 2D SAR image. It typically includes range resolution $
\delta _{n}^{r}\leqslant d_{\min}^{r}
$ and azimuth resolution $
\delta _{n}^{a}\leqslant d_{\min}^{a}
$. Note that $d_{\min}^{r}$ and $d_{\min}^{a}$ denote the minimum resolvable distances between targets along the range and azimuth dimensions, respectively.
Conventional SAR systems enhance resolution by adjusting parameters such as signal bandwidth and antenna aperture, thereby improving target distinguishability \cite{hein2003processing}. However, it is important to note that this work does not focus on target recognition or imaging performance. Instead, our primary objective is to study the optimization of UAV trajectories. Hence, in future work, target distinguishability will be modeled by assigning varying resolution parameters, while abstracting from the detailed influence of signal bandwidth and radar system architecture on the achievable resolution. 
\subsection{Communication analysis}
This work focuses on the communication link between the HAPs and ground users. To facilitate analytical modeling, we assume that the ground communication user is located within the area imaged by the UAV. In particular, for mathematical tractability, it is further assumed that a representative communication user is positioned at the center of the UAV's directional beam.

To simplify the analysis, we assume that the channel between the HAPs and the ground user is dominated by a line-of-sight (LoS) component, and that the air-to-ground propagation adheres to the free-space path loss model. Under this assumption, the channel gain at time slot $n$ can be expressed as
\begin{equation}
    \label{eq:channel_gain}
\rho _n=\frac{\rho _0}{\left\| \boldsymbol{p}_n-\mathbf{q}_{c}^{n} \right\| ^2}
\end{equation}
where $\rho _0$ denotes the channel gain in one meter.

\begin{remark}
The free-space path loss model in \eqref{eq:channel_gain} captures the dominant 
large-scale attenuation in LoS-dominated HAPs--ground links and enables analytical 
tractability in the joint trajectory and resource optimization.

In practical environments, additional effects such as atmospheric absorption or 
rain attenuation (especially at high-frequency bands), residual scattering, and 
shadowing may also be present. These effects can be incorporated by augmenting the 
large-scale path loss with an additional attenuation factor and/or a composite 
fading coefficient. Specifically, the channel power gain in 
\eqref{eq:channel_gain} can be generalized as
\begin{equation}
\rho_n = \frac{\rho_0}{\left\| \mathbf{p}_n - \mathbf{q}_c^n \right\|^2}
\cdot 10^{-\frac{L_{\mathrm{atm}}(f,\theta_n)+L_{\mathrm{add}}}{10}}
\cdot |h_n|^2,
\end{equation}
where $L_{\mathrm{atm}}(f,\theta_n)$ accounts for atmospheric or rain attenuation 
depending on the carrier frequency $f$ and elevation angle $\theta_n$, 
$L_{\mathrm{add}}$ represents other deterministic losses, and $h_n$ models 
small-scale fading, such as Rician fading for LoS-dominant air-to-ground channels.

Importantly, the proposed optimization framework remains applicable under this 
generalized channel model, since the additional effects mainly modify the 
effective channel gain and thus the required transmit power to satisfy a given 
QoS constraint, without altering the problem structure.
\end{remark}

Furthermore, the received SNR in ground user is 
\begin{equation}
    \label{}
\psi _n=\frac{P_t(n)G_tG_{r}^{c}\rho _n}{\tilde{\sigma}_{0}^{2}}
\end{equation}
where ${\tilde{\sigma}_{0}^{2}}$ denotes the noise power and $G_{r}^{c}$ is the receiver gain for communication user. Consequently, the final expression for the achievable communication rate can be derived as
\begin{equation}
    \label{}
R\left( n \right)=B\log \left( 1+\psi _n \right) 
\end{equation}
where $B$ is the bandwidth for communication link. 

\section{Proposed framework for HAPs-UAV bistatic ISARAC}
\label{sec3}
This section first formulates a joint SAR sensing–communication optimization model tailored for the HAPs–UAV bistatic scenario. Then, we develop a solution algorithm based on AO to address the UAV 3D trajectory planning and coupled sensing/communication decision-making problem in practical deployments.
\subsection{Problem Formulation}
This paper considers a bistatic ISARAC scenario, where the HAPs transmits ISARAC signals to serve ground users, while a UAV, exploiting its mobility, performs SAR sensing by receiving the reflected echoes of the HAPs’ signals from ground targets. Since communication remains the primary objective, the goal is to maximize the achievable communication sum rate under UAV sensing and energy constraints, namely\footnote{The proposed formulation assumes nominal platform positions and channel states. 
In practice, UAV positioning errors and HAPs trajectory drift may cause bistatic range-history mismatch and motion-induced phase errors in SAR imaging, which can be partly mitigated by motion compensation, calibration, and autofocus processing. 
Time-varying channel uncertainties mainly affect the effective communication channel gain and achievable rate, and robust trajectory--power optimization under such uncertainties is left for future work.}
\begin{align}
    \label{}
\begin{split}
    &
\left( \mathcal{P} 1 \right) :
\underset{\mathbf{q}_n,V_n,P_t(n)}{\max}\,\,
\sum_{n=1}^N{R\left( n \right)}
\\
&s.t. C1:\parallel \mathbf{q}_n-\mathbf{q}_{n-1}\parallel \leqslant V_n\delta ,\quad \forall n,
\\
& C2:0\leqslant V_n\leqslant V,\forall n,
\\
&C3:\sum_{n=1}^N{P_n\delta}<E,
\\
&C4:\mathrm{SNR}\left( n \right) \geqslant \mathrm{SNR}_{\min},
\\
&C5:\frac{c}{B\left( \sin \alpha +\cos \beta _n \right)}\leqslant d_{\min}^{r},
\\
&C6:\frac{\lambda z_n}{T_wV_n\cos \alpha}\leqslant d_{\min}^{a},\\
&C7:
z_{\min}\leqslant z_n\leqslant z_{\max},\forall n,\\
&C8:
0\leqslant P_t(n)\leqslant P_{\max},\forall n,
\\
&C9:
\sum_{n=1}^N{P_t(n)\delta}\leqslant E^{\mathrm{s}},
\end{split}
\end{align}
where constraints $C1$ and $C2$ respectively impose limits on the UAV’s maximum displacement and velocity, ensuring that its trajectory remains physically feasible. Constraint $C3$ represents the UAV’s energy limitation, modeled based on its initial onboard battery energy $E$. It is important to note that, since the SAR onboard the UAV is not assumed to perform real-time imaging in this paper, only the propulsion-related energy consumption is considered, while other sources of energy consumption such as communication and onboard processing are neglected. Constraint $C4$ sets a minimum SNR, namely  $\mathrm{SNR}_{\min}$, requirement for SAR imaging to ensure sufficient image quality. Constraints $C5$ and $C6$ define the minimum resolution requirements in the range and azimuth directions, respectively, to guarantee adequate spatial resolution for ground target observation. Constraint $C7$ specifies the feasible altitude range for the UAV, whereas Constraint $C8$ ensures that the HAPs’ transmit power is nonnegative and does not exceed its maximum limit. Although the present work does not explicitly account for the energy consumption of HAPs, it should be noted that, due to the platform’s inherent characteristics, energy allocation must be carefully pre-planned across different subsystems. In practice, ensuring the safe and sustained flight of HAPs has the highest priority, while sensing and communication tasks can only be supported within the remaining energy budget. Based on this rationale, we impose constraint $C9$ to explicitly capture the upper limit on the energy available for ISARAC operations.

It is important to emphasize that problem $\left( \mathcal{P} 1 \right)$ is inherently non-convex. This is due to the fact that its objective function, as well as constraints $C3$, $C4$, and $C5$, are all non-convex with respect to the parameter $\{\mathbf{q}_n, V_n, P_t(n)\}$. To address this challenge, we next introduce a tailored optimization algorithm grounded in the principle of AO, enabling efficient handling of the coupled variables.

\subsection{AO-based solution}
Since the objective function of problem $\left( \mathcal{P}1 \right)$ is neither convex nor concave with respect to the joint optimization variables $
\left\{ \mathbf{q}_{n}^{},P_t(n) \right\} 
$, directly solving it is intractable. To address this challenge, we adopt an AO strategy to decouple the original problem into two more manageable subproblems, which are solved iteratively.

Specifically, we first fix the UAV trajectory $
\left\{ \mathbf{q}_{n}^{} \right\} ,\forall n
$ and optimize the transmit power $
\left\{ P_t(n) \right\} ,\forall n
$ accordingly. Given that the transmit power subproblem often has a convex structure under fixed trajectory, it can be efficiently solved using standard convex optimization tools.

Next, with the optimized transmit power $P_{t}(n)$ obtained from the previous step, we proceed to update the UAV trajectory $\mathbf{q}_{n}$. Although the trajectory subproblem remains non-convex, its objective and constraints have exploitable structures. 
We therefore use first-order Taylor approximation and successive convex approximation (SCA) to construct a convex surrogate problem, which is solved iteratively to update the UAV trajectory.

By alternating between these two subproblems, power allocation and trajectory optimization, we obtain a suboptimal but efficient solution to the original non-convex problem. This AO-based framework ensures that each iteration monotonically improves the objective value, and convergence to a locally optimal solution is typically guaranteed under mild conditions.

Specifically, by fixing the UAV trajectory $
\left\{ \mathbf{q}_{n}^{\left( k \right)} \right\} ,\forall n
$ in the $k$th iteration, the original non-convex problem $\left( \mathcal{P}1 \right)$ can be transformed into
\begin{equation}
    \label{}
\begin{aligned}
	\left( \mathcal{P} 2 \right) :&\sum_{n=1}^N{\underset{P_t(n)}{\max}\,\,B\log \left( 1+P_t(n)\frac{G_tG_{r}^{c}\rho _0}{\left\| \boldsymbol{p}_n-\mathbf{q}_{c}^{n,\left( k \right)} \right\| ^2\sigma _{0}^{2}} \right)}\\
	&s.t.C4,C8,C9\\
\end{aligned}
\end{equation}
where the fixed ground user location $\mathbf{q}_{c}^{n,\left( k \right)}$ of the UAV antenna phase center, is computed according to \eqref{eq3} by inserting $
\mathbf{q}_{n}^{\left( k \right)}
$. Moreover, it is straightforward to verify that problem $\left( \mathcal{P}2 \right)$ is convex.

Although subproblem $\left( \mathcal{P}2 \right)$ can be solved using optimization tools such as CVX, this paper seeks to present an alternative approach that offers improved numerical stability. 
In particular, Proposition 1 enables us to derive the closed-form solution $P_{t}^{\star}(n)$ for $P_{t}(n)$.

\begin{proposition}[Capped water-filling]
\label{prop:cwf}
Fix $\{\boldsymbol q_n^{(k)}\}$ and $
\left\{ \boldsymbol{q}_{c}^{n,(k)} \right\} 
$.
Let
\begin{align}
    \label{}
a_{n}^{\left( k \right)}=&\frac{G_tG_{r}^{c}\rho _0}{\left\| \boldsymbol{p}_n-\mathbf{q}_{c}^{n,\left( k \right)} \right\| ^2\sigma _{0}^{2}}>0,\\
\begin{split}
\tilde{L}_{n}^{\left( k \right)}
=&\parallel \boldsymbol{p}_n-\boldsymbol{q}_{c}^{n,(k)}\parallel ^2\parallel \boldsymbol{q}_{n}^{(k)}-\boldsymbol{q}_{c}^{n,(k)}\parallel ^2
\\
\,\,       &\times \frac{\mathrm{SNR}_{\min}(4\pi )^3KT_bF_nB_rL_s}{G_tG_r\lambda ^3\sigma _0A_p}
\end{split}
\end{align}
and denote the clamping operator $[x]_{L}^{U} \triangleq \min\{\max\{x,L\},U\}$. 
Next, consider $(\mathcal P2)$ like 
\begin{align}
\left( \widetilde{\mathcal{P} 2} \right): &
\max_{\{P_t(n)\}} ~\sum_{n=1}^N{B\log \!\bigl( 1+a_{n}^{(k)}P_t(n) \bigr) \,\,}
\\
\mathrm{s}.\mathrm{t}. &\tilde{L}_{n}^{(k)}\le P_t(n)\le P_{\max},
\\
\,\,    &\sum_{n=1}^N{P_t(n)\delta}\le E^{\mathrm{s}},
\end{align}
If $\sum_{n=1}^N \tilde L_n^{(k)}\,\delta \le E^{\mathrm{s}}$ and $\tilde L_n^{(k)}\le P_{\max}$  for all $n$, the problem is feasible and its optimal solution admits
\begin{equation}
\label{eq:ptstar}
P_t^{\star}(n;\mu)
= \Big[\ \mu-\tfrac{1}{a_n^{(k)}}\ \Big]_{\tilde L_n^{(k)}}^{P_{\max}},\qquad n=1,\dots,N,
\end{equation}
where the water level $\mu>0$ is determined as follows. 

Define $E_{\mathrm{cap}}\triangleq E^{\mathrm{s}}/\delta$ and
\begin{equation}
S(\mu)\triangleq \sum_{n=1}^N \Big[\ \mu-\tfrac{1}{a_n^{(k)}}\ \Big]_{\tilde L_n^{(k)}}^{P_{\max}} .    
\end{equation}

Then $S(\mu)$ is continuous, piecewise linear, and nondecreasing in $\mu$ (strictly increasing whenever at least one slot is unsaturated). Hence, we have\footnote{By the monotonicity of $S(\mu)$, the water level $\mu$ can be found via a bisection search in $O(N\log(1/\varepsilon))$ time \cite{boyd2004convex,palomar2006tutorial}.
}
\begin{enumerate}
\item If $
\sum_{n=1}^N{\tilde{L}_{n}^{(k)}}\le E_{\mathrm{cap}}\le \sum_{n=1}^N{P_{\max}}
$, there exists a unique $\mu>0$ such that $S(\mu)=E_{\mathrm{cap}}$, i.e.,
$\sum_{n=1}^{N} P_t^{\star}(n;\mu)\,\delta = E$ (the energy constraint is active).
\item If $
E_{\mathrm{cap}}\geqslant \sum_{n=1}^N{P_{\max}}
$, then $P_t^{\star}(n)=P_{\max}$ for all $n$ and 
$
\sum_{n=1}^N{P_{t}^{\star}(n )\delta}=\sum_{n=1}^N{P_{\max}\delta}\le E^{\mathrm{s}}
$ (slack).
\item If $
E_{\mathrm{cap}}\le \sum_{n=1}^N{\tilde{L}_{n}^{(k)}}
$, $(\mathcal P2)$ is infeasible.
\end{enumerate}
\end{proposition}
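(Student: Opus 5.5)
The plan is to solve $(\widetilde{\mathcal P2})$ through its KKT conditions. The objective $\sum_n B\log(1+a_n^{(k)}P_t(n))$ is strictly concave, since each $a_n^{(k)}>0$. The feasible set is a box intersected with a half-space, so it is a polytope.

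\emph{Feasibility and existence.} If $\tilde L_n^{(k)}\le P_{\max}$ for all $n$ and $\sum_n\tilde L_n^{(k)}\delta\le E^{\mathrm s}$, then the point $P_t(n)=\tilde L_n^{(k)}$ is feasible. Hence the set is nonempty and compact, and a unique maximizer exists. The constraints are affine, so Slater's condition holds in its refined (linear-constraint) form, and the KKT conditions are necessary and sufficient.

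\emph{Stationarity and the clamped form.} Introduce a multiplier $\nu\ge0$ for the energy constraint, written as $\sum_n P_t(n)\le E_{\mathrm{cap}}$, together with multipliers for the box constraints. Stationarity gives
\begin{equation*}
\frac{B a_n^{(k)}}{1+a_n^{(k)}P_t(n)}=\nu-\underline\lambda_n+\overline\lambda_n .
\end{equation*}
Assume first $\nu>0$ and set $\mu=B/\nu$. Consider the unconstrained stationary point $\mu-1/a_n^{(k)}$ of slot $n$ and split into three cases.
\begin{itemize}
\item If it lies strictly inside $[\tilde L_n^{(k)},P_{\max}]$, both box multipliers vanish and $P_t(n)$ equals it.
\item If it falls below $\tilde L_n^{(k)}$, the objective's derivative at the lower bound is below $\nu$. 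This forces $\underline\lambda_n>0$, so $P_t(n)=\tilde L_n^{(k)}$.
\item If it exceeds $P_{\max}$, the symmetric argument gives $P_t(n)=P_{\max}$.
\end{itemize}
This is exactly the clamp in \eqref{eq:ptstar}. Conversely, any $\mu$ with complementary slackness satisfied yields a KKT point, hence the optimum.

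\emph{Properties of $S(\mu)$ and the three cases.} Each summand $[\mu-1/a_n^{(k)}]_{\tilde L_n^{(k)}}^{P_{\max}}$ is a composition of continuous, nondecreasing, piecewise linear maps. It has slope $1$ exactly on the unsaturated interval $\mu\in(\tilde L_n^{(k)}+1/a_n^{(k)},\,P_{\max}+1/a_n^{(k)})$ and slope $0$ elsewhere. Therefore $S$ is continuous, piecewise linear and nondecreasing. It is strictly increasing wherever some slot is unsaturated. It equals $\sum_n\tilde L_n^{(k)}$ for small $\mu$ and $NP_{\max}$ for large $\mu$.

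\begin{itemize}
\item \textbf{Case 1.} The intermediate value theorem gives a $\mu$ with $S(\mu)=E_{\mathrm{cap}}$. This makes the energy constraint tight, consistent with $\nu>0$. Strict monotonicity on the relevant range gives a unique water level; the exceptions are the endpoints, where the allocation is still unique because $S$ is flat there only when all slots are clamped. Positivity of $\mu$ follows because $\mu>\min_n 1/a_n^{(k)}>0$ whenever $S(\mu)>\sum_n\tilde L_n^{(k)}$, and otherwise $\mu$ can be chosen positive.
\item \textbf{Case 2.} With $E_{\mathrm{cap}}\ge NP_{\max}$, take $\nu=0$. Every derivative is positive, so each upper bound is active and $P_t=P_{\max}$ everywhere. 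This is feasible, and the energy constraint is slack.
\item \textbf{Case 3.} When $E_{\mathrm{cap}}<\sum_n\tilde L_n^{(k)}$, every feasible point must satisfy $\sum_nP_t(n)\ge\sum_n\tilde L_n^{(k)}>E_{\mathrm{cap}}$, a contradiction, so the problem is infeasible. The boundary case of equality is feasible with the single point $P_t(n)=\tilde L_n^{(k)}$, which I would note as a degenerate overlap with Case 1.
\end{itemize}

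The main obstacle is mostly bookkeeping rather than depth. It consists of showing the clamp equivalence rigorously via the sign conditions on the box multipliers, and handling uniqueness of $\mu$ on the flat pieces of $S$. There the statement should be read as uniqueness of the allocation $P_t^\star$, which holds by strict concavity, rather than uniqueness of the water level itself.
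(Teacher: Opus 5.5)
Your proposal is correct and follows the paper's route: KKT stationarity with water level $\mu=B/\nu$, the clamped allocation, then continuity and monotonicity of $S(\mu)$ with the intermediate value theorem. You are also more careful than the paper in three places: affine constraints need no strict feasibility for KKT to be necessary, the case $E_{\mathrm{cap}}=\sum_n\tilde L_n^{(k)}$ is feasible rather than infeasible, and $\mu$ need not be unique. One small correction is that flat pieces of $S$ are not confined to the endpoints, since $S$ can be constant at an intermediate level when some slots sit at $P_{\max}$ and the rest at $\tilde L_n^{(k)}$; your reading of uniqueness as uniqueness of the allocation still covers this, because nondecreasing summands whose sum is constant on an interval are each constant there.
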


\begin{proof}
The objective $
\sum_{n=1}^N{B\log\mathrm{(}1}+a_{n}^{(k)}P_t(n))
$ is strictly concave in $\{P_t(n)\}$ and the constraints are affine. Thus $\left( \widetilde{\mathcal{P} 2} \right)$ is a convex program. 
When $\sum_n \tilde L_n^{(k)}\,\delta < E$ and $\tilde L_n^{(k)}<P_{\max}$ for all $n$,
Slater's condition holds, so Karush–Kuhn–Tucker (KKT) conditions are necessary and sufficient \cite{rockafellar2015convex}.

Form the Lagrangian expression \cite{boyd2004convex}
\begin{equation}
    \label{}
    \begin{split}
\mathcal{L} =&\sum_{n=1}^N{B\log\mathrm{(}1}+a_{n}^{(k)}P_t(n))-\tilde{\lambda}\bigl( \sum_{n=1}^N{P_t(n)\delta}-E^{\mathrm{s}} \bigr) 
\\
\,\,    &+\sum_{n=1}^N{\nu _n\bigl( P_t(n)-\tilde{L}_{n}^{(k)} \bigr)}+\sum_{n=1}^N{\xi _n\bigl( P_{\max}-P_t(n) \bigr) ,}
    \end{split}
\end{equation}
with multipliers $\tilde{\lambda}\ge0$, $\nu_n\ge0$, and $\xi_n\ge0$.
The stationarity conditions read
\begin{equation}
    \label{}
    \frac{\partial \mathcal L}{\partial P_t(n)}
= \frac{B a_n^{(k)}}{1+a_n^{(k)}P_t^{\star}(n)} - \lambda\delta + \nu_n - \xi_n = 0,\quad n=1,\dots,N.
\end{equation}

Complementary slackness gives $\nu_n\big(P_t^{\star}(n)-\tilde L_n^{(k)}\big)=0$ and $\xi_n\big(P_{\max}-P_t^{\star}(n)\big)=0$.
Hence, if $\tilde L_n^{(k)}<P_t^{\star}(n)<P_{\max}$, then $\nu_n=\xi_n=0$ and
\begin{equation}
    \label{}
\frac{Ba_{n}^{(k)}}{1+a_{n}^{(k)}P_{t}^{\star}(n)}=\tilde{\lambda}\delta \,\,\Rightarrow \mathrm{
}P_{t}^{\star}(n)=\frac{B}{\tilde{\lambda}\delta}-\frac{1}{a_{n}^{(k)}}.
\end{equation}
If $P_t^\star(n)=\tilde L_n^{(k)}$, the expression is clipped at the lower bound. Meanwhile, if $P_t^\star(n)=P_{\max}$, it is clipped at the upper bound.
Letting $\mu\triangleq B/(\lambda\delta)$ yields \eqref{eq:ptstar}.

To determine $\mu$, define $E_{\mathrm{cap}}\triangleq E^{\mathrm{s}}/\delta$ and
$S(\mu)\triangleq \sum_{n=1}^N P_t^{\star}(n;\mu)$.
Since $P_t^{\star}(n;\mu)=[\,\mu-1/a_n^{(k)}\,]_{\tilde L_n^{(k)}}^{P_{\max}}$,
$S(\mu)$ is continuous and nondecreasing in $\mu$, with
\begin{equation}
\label{}
\lim_{\mu\to-\infty} S(\mu)=\sum_{n=1}^N \tilde L_n^{(k)},
\quad
\lim_{\mu\to+\infty} S(\mu)=\sum_{n=1}^N P_{\max}.
\end{equation}

Hence, if $
\sum_{n=1}^N{\tilde{L}_{n}^{(k)}}\le E_{\mathrm{cap}}\le \sum_{n=1}^N{P_{\max}}
$,
there exists a unique $\mu>0$ such that $S(\mu)=E_{\mathrm{cap}}$, i.e.,
$\sum_{n=1}^N P_t^{\star}(n;\mu)\,\delta=E^{\mathrm{s}}$ (the energy constraint is active).
If $E_{\mathrm{cap}} \ge \sum_n P_{\max}$, then $P_t^{\star}(n)=P_{\max}$ for all $n$ and
$\sum_{n=1}^N P_t^{\star}(n)\,\delta=\sum_{n=1}^N P_{\max}\,\delta \le E^{\mathrm{s}}$ (slack).
By complementary slackness, the optimal dual multiplier is $\tilde{\lambda}^\star=0$ (thus the
unconstrained expression $B/(
\tilde{\lambda}
^\star\delta)$ diverges, which is consistent with saturation at $P_{\max}$).
If $E_{\mathrm{cap}}<\sum_{n=1}^N \tilde L_n^{(k)}$, $(\mathcal P2)$ is infeasible.

This completes the proof.
\end{proof}

Next, we optimize the UAV trajectory based on the transmit power $
P_{t}^{\left( k \right)}(n)
$ obtained from the previous step. Accordingly, from the original problem $\left( \mathcal{P}1 \right)$, we derive another equivalent subproblem that focuses on trajectory optimization based on the fixed $
P_{t}^{\left( k \right)}(n)
$, namely
\begin{equation}
    \label{}
\begin{aligned}
	\left( \mathcal{P} 3 \right) :&\sum_{n=1}^N{\underset{V_n,\mathbf{q}_{n}^{}}{\max}\,\,B\log \left( 1+P_{t}^{\left( k \right)}(n)\frac{G_tG_{r}^{c}\rho _0}{\left\| \boldsymbol{p}_n-\mathbf{q}_{c}^{n} \right\| ^2\sigma _{0}^{2}} \right)}\\
	&s.t.C1-C7\\
\end{aligned}
\end{equation}

Notice  that the objective function $
B\log \left( 1+P_{t}^{\left( k \right)}(n)\frac{G_tG_{r}^{c}\rho _0}{\left\| \boldsymbol{p}_n-\mathbf{q}_{c}^{n} \right\| ^2\sigma _{0}^{2}} \right) 
$ is not concave with respect to  $\mathbf{q}_{n}^{}$, but is convex with respect to $
\left\| \boldsymbol{p}_n-\mathbf{q}_{c}^{n} \right\| ^2
$. Since  a convex function is globally bounded below by its first-order Taylor approximation at any point, we construct a lower bound for the objective around the UAV trajectory at $k$th iteration as
\begin{equation}
    \label{}
    \begin{split}
&B\log \left( 1+P_{t}^{\left( k \right)}(n)\frac{G_tG_{r}^{c}\rho _0}{\left\| \boldsymbol{p}_n-\mathbf{q}_{c}^{n} \right\| ^2\sigma _{0}^{2}} \right) \geqslant \\&\mathcal{M} ^{\left( k \right)}\left( n \right) -\mathcal{N} ^{\left( k \right)}\left( n \right) \left[ \left\| \boldsymbol{p}_n-\mathbf{q}_{c}^{n} \right\| ^2-\left\| \boldsymbol{p}_n-\mathbf{q}_{c}^{n,\left( k \right)} \right\| ^2 \right] \\
&\triangleq \tilde{R}\left( \mathbf{q}_{n}^{} \right) 
    \end{split}
\end{equation}
where both $
\mathcal{M} ^{\left( k \right)}\left( n \right) 
$ and $
\mathcal{N} ^{\left( k \right)}\left( n \right) 
$ are constant, which can be derived as

\begin{align}
      \label{}
&\mathcal{M} ^{\left( k \right)}\left( n \right) =B\log \left( 1+P_{t}^{\left( k \right)}(n)\frac{G_tG_{r}^{c}\rho _0}{\left\| \boldsymbol{p}_n-\mathbf{q}_{c}^{n,\left( k \right)} \right\| ^2\sigma _{0}^{2}} \right) 
\\
&\mathcal{N} ^{\left( k \right)}\left( n \right) =\frac{BP_{t}^{\left( k \right)}(n)G_tG_{r}^{c}\rho _0\log e}{\left\| \boldsymbol{p}_n-\mathbf{q}_{c}^{n,\left( k \right)} \right\| ^4\sigma _{0}^{2}+\left\| \boldsymbol{p}_n-\mathbf{q}_{c}^{n,\left( k \right)} \right\| ^2P_{t}^{\left( k \right)}(n)G_tG_{r}^{c}\rho _0}  
\end{align}

Hence, the lower bound $
\tilde{R}\left( \mathbf{q}_{n}^{} \right) 
$ is concave with respect to $\mathbf{q}_{n}^{}$ and the subproblem $\left( \mathcal{P} 3 \right)$ can be approximated as
\begin{equation}
    \label{}
\begin{aligned}
	\left( \mathcal{P} 4 \right) :&\sum_{n=1}^N{\underset{V_n,\mathbf{q}_{n}^{}}{\max}\,\,\tilde{R}\left( \mathbf{q}_{n}^{} \right)}\\
	&s.t.C1-C7\\
\end{aligned}
\end{equation}

Noted that subproblem $\left( \mathcal{P} 4 \right)$ remains non‐convex, since constraints $C3$, $C4$, $C5$ are non‐convex. Next, we will systematically reformulate each of these constraints into convex form.

To begin, we recast $C3$ as a convex constraint. Accordingly, define a new 
auxiliary variables $
\boldsymbol{h}=\left[ h_1,h_2,\cdots ,h_N \right] ^T\in \mathbb{R} ^N
$ with $
 h_{n}^{}>0, \forall n 
$, namely
\begin{equation}
    \label{eq20}
h_{n}^{2}=\sqrt{1+\frac{V_{n}^{4}}{4v_{0}^{4}}}-\frac{V_{n}^{2}}{2v_{0}^{2}}
\end{equation}
Moreover, by a direct algebraic manipulation, one can readily establish the following identity
\begin{equation}
    \label{eq21}
\frac{1}{h_{n}^{2}}=h_{n}^{2}+\frac{V_{n}^{2}}{v_{0}^{2}}
\end{equation}

Inserting \eqref{eq20} into \eqref{eq4} yields 
\begin{equation}
    \label{}
\widetilde {P_n}
\triangleq 
P_0\left( 1+\frac{3V_{n}^{2}}{u_{\mathrm{tip}}^{2}} \right) +\frac{1}{2}d_0\varrho sAV_{n}^{3}+P_1h_{n}^{}
\end{equation}

It is straightforward to verify that $\widetilde{P}_n$ is jointly convex in the variables $(V_{n},\,h_{n})$. Consequently, we can further reformulate subproblem $\left( \mathcal{P} 4 \right) $ into $\left( \mathcal{P} 5 \right) $, namely
 \begin{equation}
    \label{}
\begin{split}
\left( \mathcal{P} 5 \right) :&
\underset{\mathbf{q}_n,V_n}{\max}\,\, \widetilde R\left( \mathbf{q}_{n}^{} \right)
\\
s.t.& C1-C2,\\
&
C3a:\sum_{n=1}^N{\widetilde{P_n}\delta}<E,
\\
&C3b:\frac{1}{h_{n}^{2}}\leqslant h_{n}^{2}+\frac{V_{n}^{2}}{v_{0}^{2}},
\\
& C4-C7.
\end{split}
\end{equation}

Besides, constraint $C3b$ is not convex.  However, it is easy to see that $C3b$ can be written as the difference of convex (DC) of two convex functions, namely $
g_{3}(h_{n}) = \frac{1}{h_{n}^{2}}$
$
 g_{4}(h_{n},V_{n}) = h_{n}^{2} + \frac{V_{n}^{2}}{v_{0}^{2}}.
$
Then, the first‐order Taylor expansion of $ g_{4}(p_{n},V_{n})$ at the current point $\bigl(h_{n}^{(k)},V_{n}^{(k)}\bigr)$ is given by
\begin{equation}
    \label{}
    \begin{split}
       {g_4}\left( h_{n}^{},V_{n}^{} \right) 
\geqslant 
&h_{n}^{\left( k \right) ^2}+\frac{V_{n}^{\left( k \right) ^2}}{v_{0}^{2}}+2h_{n}^{\left( k \right)}\left( h_{n}^{}-h_{n}^{\left( k \right)} \right) 
\\
&\quad\quad\quad\quad\quad\quad+\frac{2V_{n}^{\left( k \right)}}{v_{0}^{2}}\left( V_{n}^{}-V_{n}^{\left( k \right)} \right)\\
\triangleq &\widetilde{g_4}.  
    \end{split}
\end{equation}

By using the lower-bound approximation $\widetilde{g_4}\left( h_{n}^{},V_{n}^{} \right)$ in place of ${g_4}\left( h_{n}^{},V_{n}^{} \right)$, constraint $C3b$ is rendered convex. Consequently, subproblem $\left( \mathcal{P} 5 \right) $ can be equivalently formulated as 
 \begin{equation}
    \label{}
\begin{split}
\left( \mathcal{P} 6 \right) :&
\underset{\mathbf{q}_n,V_n}{\max}\,\, \widetilde R\left( \mathbf{q}_{n}^{} \right)
\\
s.t.& C1-C2,\\
&
C3a:\sum_{n=1}^N{\widetilde{P_n}\delta}<E,
\\
&
\widetilde{C3b}:g_3\left( h_{n}^{} \right) -\widetilde{g_4}\left( h_{n}^{},V_{n}^{} \right) \leqslant 0,
\\
& C4-C7.
\end{split}
\end{equation}

However, due to the non-convexity of constraint $C4$, the suboptimization problem $\left( \mathcal{P} 6 \right) $ remains non-convex. Therefore, we proceed to transform the constraint  $C4$ into a convex constraint. 
In this respect, $C4$ can be further expressed as 
\begin{equation}
    \label{eqq22}
\left\| \mathbf{q}_n-\mathbf{q}_{c}^{n} \right\| \left\| \boldsymbol{p}_n-\mathbf{q}_{c}^{n} \right\| \leqslant \sqrt{P_{t}^{\left( k \right)}(n)\zeta}
\end{equation}
with the expression of $
\zeta
$ being $
\zeta =\frac{G_tG_r\lambda ^3\sigma _0A_p}{\left( 4\pi \right) ^3\mathrm{SNR}_{\min}KT_bF_nB_rL_s}
$. 

Next, denote 
\begin{align}
    \label{}
g_5\left( \mathbf{q}_n \right) =&\left\| \boldsymbol{p}_n-\mathbf{q}_{c}^{n} \right\| 
\\
g_6\left( \mathbf{q}_n \right) =&\left\| \mathbf{q}_n-\mathbf{q}_{c}^{n} \right\| =z_n\sqrt{1+\tan ^2\alpha}
\label{}
\end{align}

Note that $g_5\left( \mathbf{q}_n \right) $ and $g_6\left( \mathbf{q}_n \right)$ are all convex. Hence, the left-hand side of the inequality  in \eqref{eqq22} can be rewritten as 
\begin{equation}
    \label{eqq25}
g_5\left( \mathbf{q}_n \right) g_6\left( \mathbf{q}_n \right) =\left( g_5\left( \mathbf{q}_n \right) +g_6\left( \mathbf{q}_n \right) \right) ^2-\left( g_{5}^{2}\left( \mathbf{q}_n \right) +g_{6}^{2}\left( \mathbf{q}_n \right) \right) 
\end{equation}
In \eqref{eqq25}, the product $g_5\left( \mathbf{q}_n \right) g_6\left( \mathbf{q}_n \right)$ is represented as a DC functions. To facilitate convex approximation, we apply the first-order Taylor expansion to obtain convex lower bounds for $g_{5}^{2}\left( \mathbf{q}_n \right)$ and $g_{6}^{2}\left( \mathbf{q}_n \right)$ around the reference point $\mathbf{q}_{n}^{(k)}$, namely
\begin{align}
    \label{}
g_{5}^{2}\left( \mathbf{q}_n \right) &\geqslant g_{5}^{2}\left( \mathbf{q}_{n}^{(k)} \right) +\bigtriangledown g_{5}^{2}\left( \mathbf{q}_{n}^{(k)} \right) \left( \mathbf{q}_n-\mathbf{q}_{n}^{(k)} \right) \triangleq \widetilde{g_{5}^{2}}\left( \mathbf{q}_n \right) 
\\
g_{6}^{2}\left( \mathbf{q}_n \right) &\geqslant g_{6}^{2}\left( \mathbf{q}_{n}^{(k)} \right) +\bigtriangledown g_{6}^{2}\left( \mathbf{q}_{n}^{(k)} \right) \left( \mathbf{q}_n-\mathbf{q}_{n}^{(k)} \right) \triangleq \widetilde{g_{6}^{2}}\left( \mathbf{q}_n \right) 
\end{align}
where $\nabla f(x)$ denotes the first-order derivative of function $f(x)$ with respect to  variable $x$.
Moreover, the expression of $\bigtriangledown g_{5}^{2}\left( \mathbf{q}_{n}^{} \right) $ is 
\begin{align}
    \label{eqq28}
\nabla g_{5}^{2}(\mathbf{q}_n)=&\left[ \begin{array}{c}
	x_{n}^{c}-x_{p}^{n}\\
	y_{n}^{c}-y_{p}^{n}\\
	\tan \alpha \left[ (x_{n}^{c}-x_{p}^{n})\sin \gamma _n-(y_{n}^{c}-y_{p}^{n})\cos \gamma _n \right]\\
\end{array} \right] 
\\
\nabla g_{6}^{2}(\mathbf{q}_n)=&2\left( 1+\tan ^2\alpha \right) z_n
\end{align}
Interesting reader may refer to Appendix \ref{refA} for more details about the derivation of \eqref{eqq28}.

Now, the constraint $C4$ can be reformulated as
\begin{equation}
    \label{}
\widetilde{C4}:\left( g_5\left( \mathbf{q}_n \right) +g_6\left( \mathbf{q}_n \right) \right) ^2-\widetilde{g_{5}^{2}}\left( \mathbf{q}_n \right) -\widetilde{g_{6}^{2}}\left( \mathbf{q}_n \right) \leqslant \sqrt{P_{t}^{\left( k \right)}(n)\zeta}
\end{equation}


After the above transformations, the sub-optimization problem $\left( \mathcal{P}6 \right)$ can be equivalently expressed as 
\begin{equation}
    \label{}
\begin{aligned}
	\left( \mathcal{P} 7 \right) :&\underset{\mathbf{q}_n,V_n}{\max}\,\,\widetilde{R}\left( \mathbf{q}_{n}^{} \right)\\
	s.t.&C1-C2,C3a,\widetilde{C3b},\widetilde{C4},C5-C7.\\
\end{aligned}
\end{equation}

Upon examining the sub-optimization problem $\left( \mathcal{P}7 \right)$, it becomes evident that the non-convexity of constraint $C5$ continues to impede the convexity of the overall formulation. To overcome this challenge, we now focus on transforming constraint $C5$ into a convex form. In this respect, $C5$ can be rewritten as 
\begin{equation}
    \label{}
\cos \beta _n\geqslant \frac{c}{Bd_{\min}^{r}}-\sin \alpha 
\end{equation}
Notice that $
\cos ^2\beta _n=1-\sin ^2\beta _n
$, it is easily to obtain 
\begin{equation}
    \label{}
\widetilde{C5}:\frac{H^2}{\left\| \boldsymbol{p}_n-\mathbf{q}_{c}^{n} \right\| ^2}\leqslant 1-\left( \frac{c}{Bd_{\min}^{r}}-\sin \alpha \right) ^2
\end{equation}

Notice that the constraint $\widetilde{C5}$ is also non-convex. Further, to convexify it, we introduce two slack variables $
\boldsymbol{t}=\left[ t_1,t_2,\cdots ,t_N \right] ^T\in \mathbb{R} ^N$, $
\boldsymbol{u}=\left[ u_1,u_2,\cdots ,u_N \right] ^T\in \mathbb{R} ^N
$. Next, the constraint $\widetilde{C5}$ can equivalently be written as
\begin{align}
    \label{}
    \begin{split}
\exists u_n>0,\widetilde{C5a}:\,u_n\;\leqslant \;1-\Bigl( \tfrac{c}{B\,d_{\min}^{r}}-\sin \alpha \Bigr) ^2,\forall n\\
\exists t_n>0,\widetilde{C5b}:t_n\;\ge \;\left\| \boldsymbol{p}_n-\mathbf{q}_{c}^{n} \right\| ^2,\forall n
\\
\widetilde{C5c}:
\left[ \begin{matrix}
	u_n&		H\\
	H&		t_n\\
\end{matrix} \right] \succeq 0
    \end{split}
\end{align}
 At this point, the originally non-convex optimization problem $\left( \mathcal{P}3 \right)$ has been equivalently reformulated as a convex problem $\left( \mathcal{P}8 \right)$, which can be expressed as 
\begin{equation}
    \label{}
\begin{aligned}
	\left( \mathcal{P} 8 \right) :&\underset{\mathbf{q}_n,V_n}{\max}\,\,\widetilde{R}\left( \mathbf{q}_{n}^{} \right)\\
	s.t.&C1-C2,C3a,\widetilde{C3b},\widetilde{C4},\widetilde{C5a},\widetilde{C5b},\widetilde{C5c},C6-C7.\\
\end{aligned}
\end{equation}
The sub-optimization problem $\left( \mathcal{P}8 \right)$ can be efficiently solved using established optimization tools such as CVX \cite{grant2014cvx}. Moreover, the detailed procedure of the proposed optimization algorithm is outlined in Algorithm \ref{alg1}.

\begin{algorithm}
\caption{The proposed trajectory planning algoithm for HAPs-UAV bistatic ISARAC for problem $\left( \mathcal{P} 8 \right)$}
\label{alg1}
\begin{algorithmic}[1]
\STATE \textbf{Initialization:} initial the UAV path and power $
\left\{ \mathbf{q}_{n}^{\left( 0 \right)},V_{n}^{\left( 0 \right)},P_{t}^{\left( 0 \right)}(n),n=1,2,\cdots ,N \right\} 
$,  \textbf{Set} tolerance $
\delta =10^{-10}
$ and $
{R}^{\left( 0 \right)}=-\infty 
$;
\STATE obtain the  $
\left\{ h_{n}^{\left( 0 \right)},n=1,2,\cdots ,N \right\} 
$ via \eqref{eq21};
\REPEAT 
\STATE Set $k=k+1$
\STATE Get the power $
\left\{ P_{t}^{\left( k \right)}(n),n=1,2,\cdots ,N \right\} 
$ by solving  \( \left( \widetilde{\mathcal{P} 2} \right) \) around the points $
\left\{ \mathbf{q}_{n}^{\left( k-1 \right)},n=1,2,\cdots ,N \right\} 
$;
\STATE Obtain the UAV trajectory parameters $
\left\{ \mathbf{q}_{n}^{\left( k \right)},V_{n}^{\left( k \right)},h_{n}^{\left( k \right)},n=1,2,\cdots ,N \right\} 
$ by by solving  \( \left( \mathcal{P} 8 \right) \) in the fixed power $
\left\{ P_{t}^{\left( k \right)}(n),n=1,2,\cdots ,N \right\} 
$;

\STATE Calculate $
R^{\left( k \right)}
$ by inserting $
\left\{ \mathbf{q}_{n}^{\left( k \right)},V_{n}^{\left( k \right)},P_{t}^{\left( k \right)}(n),n=1,2,\cdots ,N \right\} 
$ into the objective of \( \left( \mathcal{P} 1 \right) \);

\STATE Computate $
\xi =\left| \frac{{R}^{\left( k \right)}-{R}^{\left( k-1 \right)}}{{R}^{\left( k \right)}} \right|
$
\UNTIL{$
\xi \leqslant \delta 
$}
\RETURN $
{R}^{\left( k \right)}
$
\end{algorithmic}
\end{algorithm}

\subsection{Computational Complexity Analysis}
\label{subsec:complexity}

We analyze the computational complexity of the proposed AO-based algorithm. 
Let $I_{\rm AO}$ denote the number of AO iterations, $\epsilon_{\rm wf}$ denote the bisection accuracy for determining the water level in the capped water-filling solution, and $I_{\rm IP}$ denote the number of interior-point iterations for solving the convex trajectory subproblem.

In each AO iteration, the proposed algorithm mainly consists of two steps, namely transmit-power allocation and UAV trajectory update. 
For the transmit-power allocation subproblem, the capped water-filling solution in Proposition~\ref{prop:cwf} is adopted. 
For the transmit-power allocation subproblem, the capped water-filling solution in Proposition~\ref{prop:cwf} is adopted. 
Given a water level $\mu$, the transmit power $P_t^\star(n;\mu)$ can be directly computed for each time slot according to \eqref{eq:ptstar}. 
Therefore, evaluating the power allocation over all $N$ time slots requires $\mathcal{O}(N)$ operations. 
The water level $\mu$ is determined from the total energy constraint by solving $S(\mu)=E_{\rm cap}$, where $S(\mu)=\sum_{n=1}^{N}P_t^\star(n;\mu)$ is a monotonic function. 
Thus, a bisection search can be used to find $\mu$ with accuracy $\epsilon_{\rm wf}$. 
Since each bisection step requires evaluating $S(\mu)$ over all $N$ time slots, and the number of bisection steps is $\mathcal{O}(\log(1/\epsilon_{\rm wf}))$, the complexity of the power-allocation step is
\begin{equation}
\mathcal{O}\left(N\log\left(\frac{1}{\epsilon_{\rm wf}}\right)\right).
\end{equation}

For fixed transmit powers, the UAV trajectory subproblem $(\mathcal{P}8)$ is solved using a standard convex solver. 
The optimization variables include the UAV 3D positions $\{\mathbf{q}_n\}_{n=1}^{N}$, velocities $\{V_n\}_{n=1}^{N}$, auxiliary variables $\{h_n\}_{n=1}^{N}$, and slack variables $\{t_n,u_n\}_{n=1}^{N}$. 
Thus, the number of scalar optimization variables scales linearly with $N$, i.e.,
\begin{equation}
n_v=\mathcal{O}(N).
\end{equation}
The number of constraints, including the mobility, propulsion-energy, SAR SNR, range/azimuth resolution, altitude, and slack-variable constraints, also scales linearly with $N$. 
Using an interior-point method, the complexity of solving $(\mathcal{P}8)$ can be approximately expressed as
\begin{equation}
\mathcal{O}\left(I_{\rm IP} n_v^3\right)
=
\mathcal{O}\left(I_{\rm IP}N^3\right).
\end{equation}

Consequently, the overall computational complexity of the proposed AO-based algorithm is approximately given by
\begin{equation}
\mathcal{O}\left(
I_{\rm AO}
\left[
N\log\left(\frac{1}{\epsilon_{\rm wf}}\right)
+
I_{\rm IP}N^{3}
\right]
\right).
\end{equation}
Since the trajectory update dominates the computational cost, the overall complexity is mainly governed by $\mathcal{O}(I_{\rm AO}I_{\rm IP}N^3)$.

Table~\ref{tab:complexity} summarizes the computational complexity of the main steps of the proposed algorithm. 
The above analysis also explains why simplified but physically meaningful models are adopted in this work. 
Specifically, the communication link is represented by an effective channel gain, and the SAR sensing requirement is characterized by SNR and spatial-resolution constraints. 
With these simplifications, the original HAPs--UAV ISARAC design can be formulated as a trajectory--power optimization problem over $N$ time slots. 
As a result, the transmit-power allocation admits a closed-form capped water-filling structure, and the UAV trajectory update can be solved through convex approximation with polynomial complexity. 
Therefore, the adopted modeling assumptions provide a tractable way to capture the main sensing--communication coupling while enabling efficient system-level trajectory and resource-allocation design.

\begin{table}[htp]
\centering
\caption{Computational complexity of the proposed AO-based algorithm}
\label{tab:complexity}
\begin{tabular}{c|c}
\hline
Step  & Complexity \\
\hline
Coefficient update 

& $\mathcal{O}(N)$ \\
\hline
Power allocation 

& $\mathcal{O}\left(N\log\left(\frac{1}{\epsilon_{\rm wf}}\right)\right)$ \\
\hline
Trajectory update 
&  $\mathcal{O}(I_{\rm IP}N^3)$ \\
\hline
Overall algorithm 
&  $\mathcal{O}\left(I_{\rm AO}\left[N\log\left(\frac{1}{\epsilon_{\rm wf}}\right)+I_{\rm IP}N^3\right]\right)$ \\
\hline
\end{tabular}
\end{table}

\section{Simulation Results}
\label{sec4}

This section presents simulation results to evaluate the proposed HAPs--UAV bistatic ISARAC system. 
Unless otherwise stated, the total sensing duration is set to 30 s with a sampling interval of 0.5 s. 
The HAPs operates at an altitude of 10 km, while the remaining parameter configurations are summarized in Table~\ref{tab2}. 
The simulation study is organized into two parts. 
First, we evaluate the convergence behavior, trajectory characteristics, power allocation, and sensing--communication tradeoff of the proposed HAPs--UAV bistatic ISARAC design. 
Second, we compare the proposed dual-platform architecture with a single-platform HAPs-based ISARAC benchmark to demonstrate the efficiency gain brought by UAV-assisted passive SAR echo reception.

\begin{table}[htp]
	\centering
	\caption{The parameters of the HAPs--UAV bistatic ISARAC \cite{hu2022trajectory,zeng2019energy}}
	\label{tab2}
	\begin{tabular}{c|c}
		\hline
		{Parameter} & {Value} 
		\\ \cline{1-2}	
		Transmit gain $G_t$ & 20 dBi
		\\
        Receive gain $G_r$ & 20 dBi 
		\\
        RCS $\sigma _0$ & 0.1  
        \\
        Loss factor $L_s$ & 3 dB
        \\
        Receiver temperature $T_b$ & 290 K
        \\
        Receiver noise figure $F_n$ & 4 dB 
        \\
        Bandwidth $B_r$ & 150 MHz
        \\
        CPI $T_w$ & 1 s
        \\
        Minimum range resolution $d_{\min}^{r}$ & 20 m
        \\
        Minimum azimuth resolution $d_{\min}^{a}$ & 20 m
        \\
        SAR observation angle $\alpha$ & $\pi/4$
        \\
        Maximum/minimum UAV height & 500 m, 10 m
        \\
        Maximum UAV speed & 100 m/s
        \\
        Blade profile power $P_0$ and tip speed $u_{\mathrm{tip}}$ & 3.4 W, 60 m/s
        \\
        Rotor induced power $P_1$ and velocity $v_0$ & 118 W, 5.4 m/s
        \\
        Rotor solidity and disc area, $s$ and $A$ & 0.02, 0.5 m$^2$
        \\
        Air density $\varrho$ and fuselage drag fraction $d_0$ & 1.225 kg/m$^3$, 0.3
        \\
        Energy available for ISARAC $E^s$ & 1.5 kW
        \\
        UAV energy $E$ & 10 kW
        \\
		\hline
	\end{tabular}
\end{table}

\subsection{Performance Evaluation of the Proposed HAPs--UAV Bistatic ISARAC System}
\label{subsec:performance_proposed}

It should be noted that, to date, no existing work has investigated three-dimensional trajectory planning for HAPs--UAV bistatic ISARAC systems under energy constraints. 
Therefore, in this subsection, we first evaluate the proposed algorithm and compare it with a communication-only HAPs system to illustrate the impact of the sensing constraints and joint trajectory--power optimization. 
A more direct comparison with a single-platform HAPs-based ISARAC benchmark is further provided in Section~\ref{subsec:single_haps_isarac}.

For the communication-only HAPs system, the corresponding optimization problem is formulated as
\begin{equation}
    \label{eq:P9}
    \begin{split}
\left( \mathcal{P} 9 \right) :&\underset{P_t(n)}{\max}\,\,\sum_{n=1}^N{R\left( n \right)}
\\
s.t.&C8:0\leqslant P_t(n)\leqslant P_{\max},\forall n,
\\
& C9:\sum_{n=1}^N{P_t(n)\delta}\leqslant E^{\mathrm{s}}.
    \end{split}
\end{equation}

For analytical tractability, we assume that in the communication-only scenario, the ground user is always located at the beam center of the HAPs transmission. A comparison between optimization problems $\left( \mathcal{P} 1 \right) $ and $\left( \mathcal{P} 9 \right) $ reveals that, in pursuit of maximizing the sum rate, $\left( \mathcal{P} 1 \right) $ not only adjusts the transmit power but also drives the ground user closer to the ground projection of the HAPs. Consequently, it is reasonable to anticipate that the sum rate performance of the communication-only system will be inferior to that of the proposed ISARAC-enabled system.

\begin{figure}[htp]
	\centering
	{
		{\includegraphics[width=0.38\textwidth]{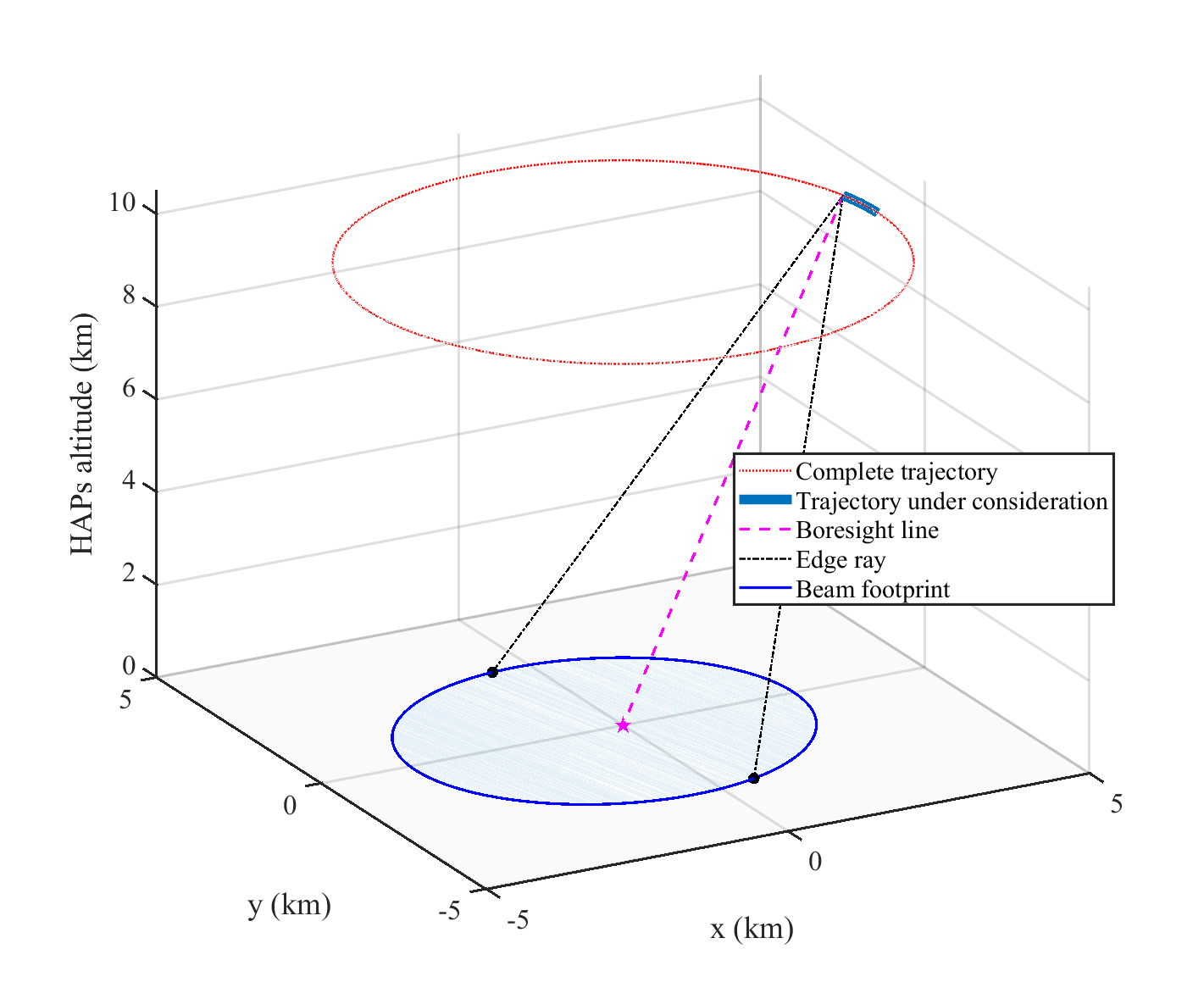}}}
	\caption{Illustration of the HAPs trajectory and the ground coverage of its signal beam.}
	\label{fig2}
\end{figure}
Fig. \ref{fig2} depicts the three-dimensional HAPs flight trajectory under the parameter settings considered in this section. The red circle indicates the complete flight path, while the blue segment highlights the portion corresponding to the sensing duration considered in this work. The shaded area on the ground represents the beam coverage region, within which users can establish communication links with the HAPs, while the signals reflected from targets can be received by the UAV to ultimately form SAR images.

\begin{figure}[htp]
	\centering
	{
		{\includegraphics[width=0.38\textwidth]{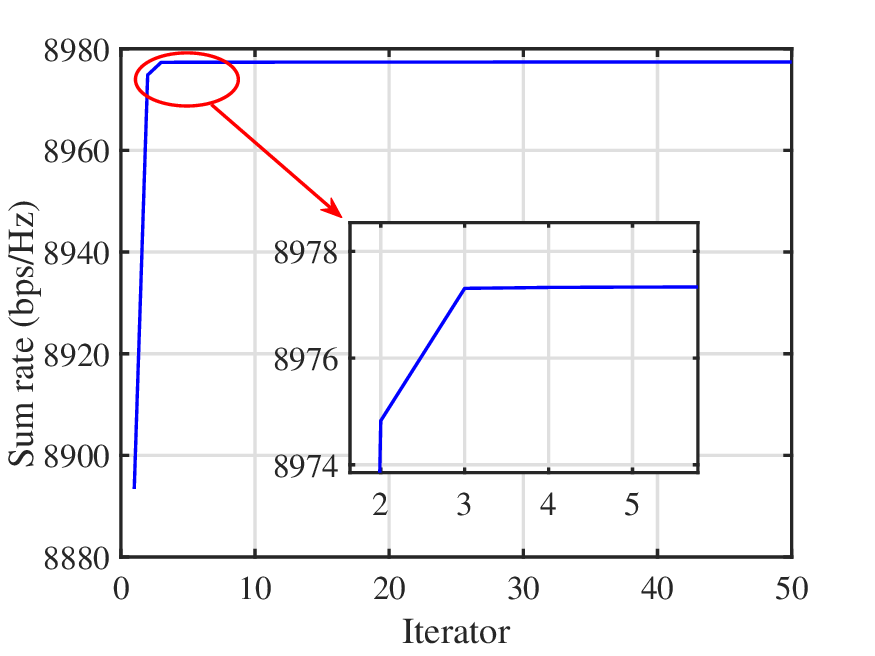}}}
	\caption{Convergence behavior of the proposed algorithm: objective function sum rate versus iteration index.}
	\label{fig3}
\end{figure}

Next, to validate the convergence of the proposed algorithm, Fig. \ref{fig3} shows the evolution of the objective function (sum rate) with respect to the iteration index. As observed, the sum rate monotonically increases and converges after approximately five iterations. This demonstrates that the proposed algorithm exhibits stable and fast convergence, thereby confirming its effectiveness in solving the formulated optimization problem.
\begin{figure}[htp]
	\centering
    \subfigure[]
	{
		{\includegraphics[width=0.43\textwidth]{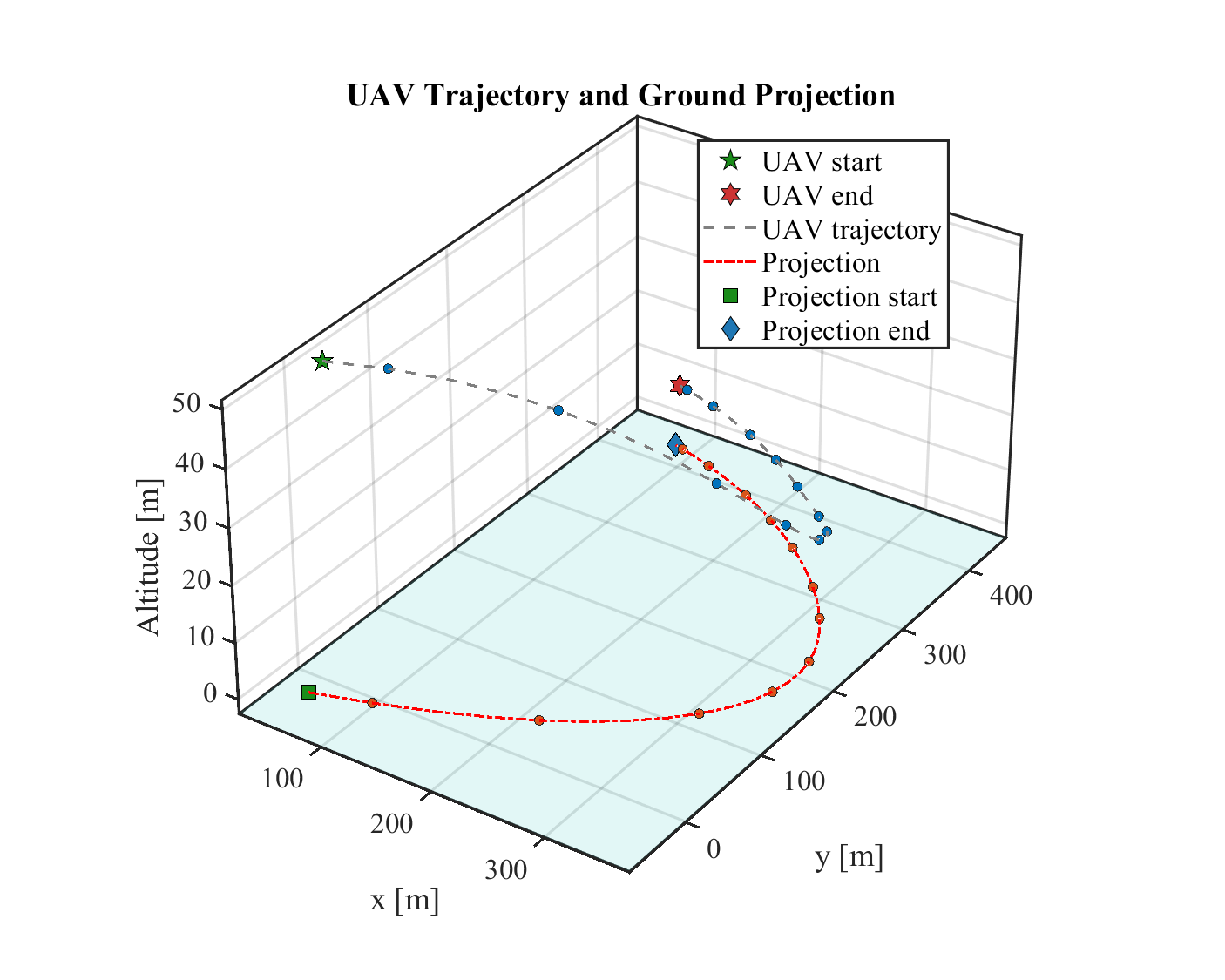}}}
	\subfigure[]{
		{\includegraphics[width=0.43\textwidth]{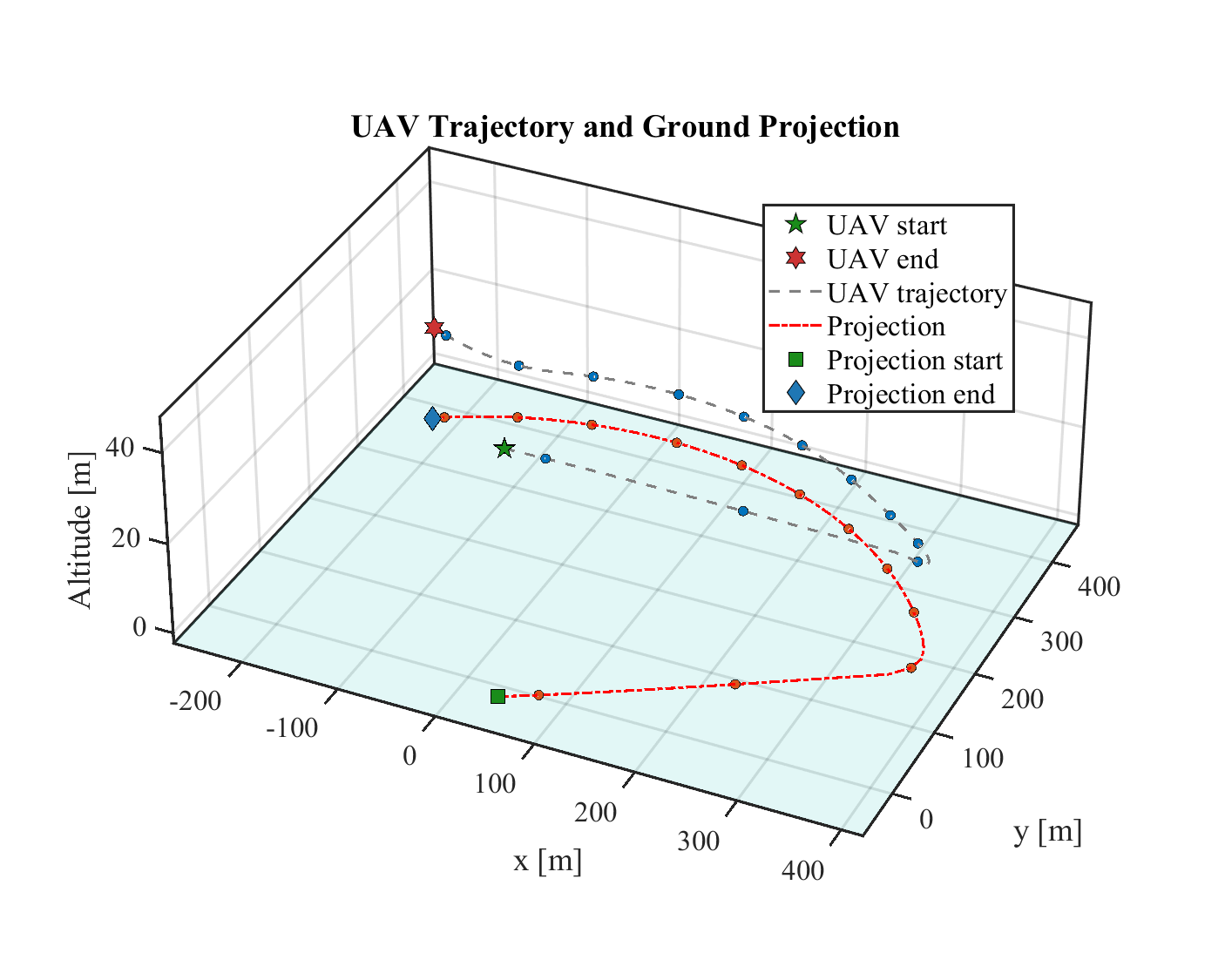}}}
	\subfigure[]{
		{\includegraphics[width=0.43\textwidth]{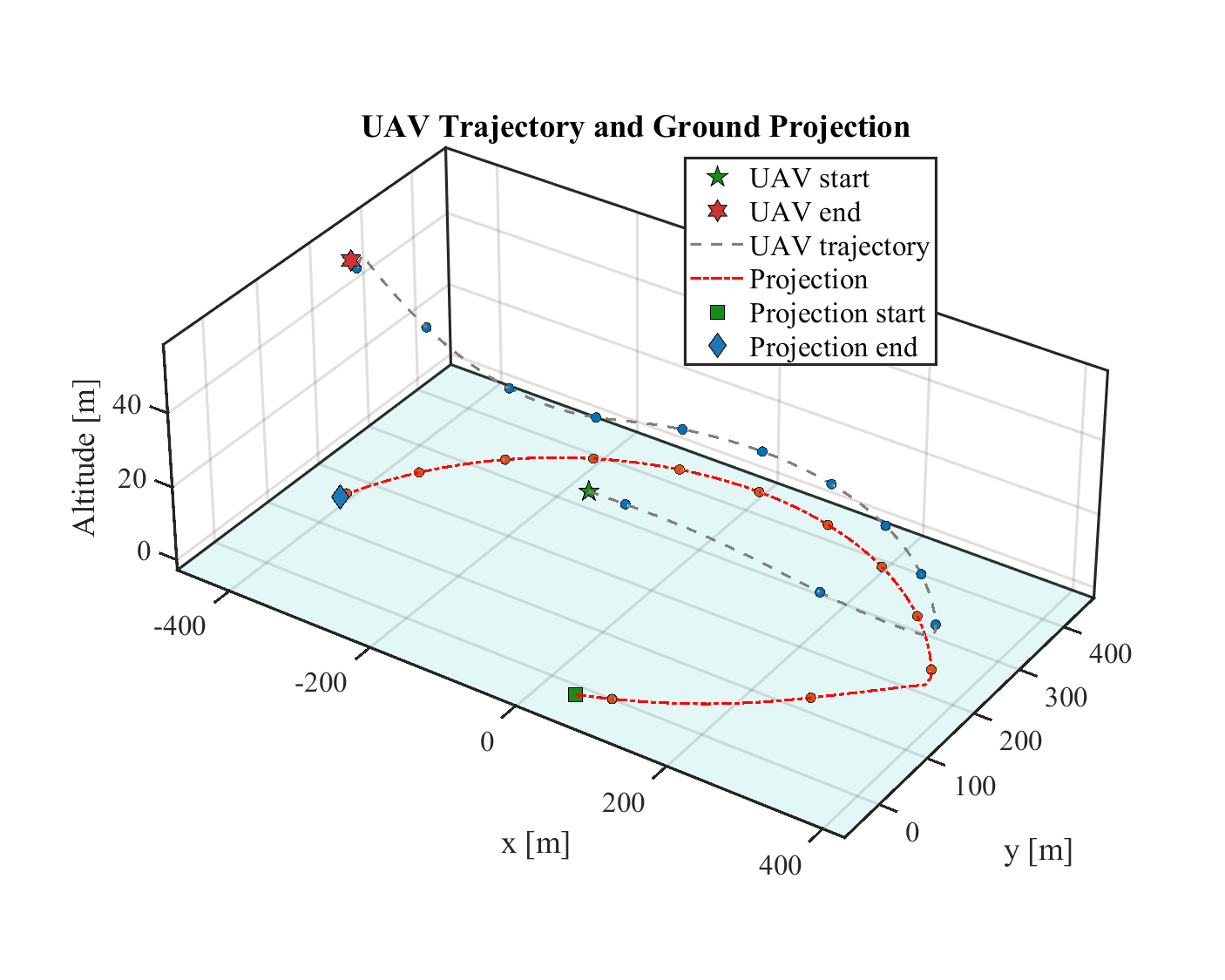}}}
	\caption{UAV 3D trajectories under different sensing durations and their ground projections. (a) $T=20s$, (b) $T=30s$, (c) $T=40s$.}
	\label{fig4}
\end{figure}

For the case of $T=20$s shown in Fig. \ref{fig4}(a), the dashed line indicates the UAV flight trajectory, while the solid red curve on the ground represents the projection of the UAV beam center. Although the projection appears approximately semicircular in shape, the actual trajectory exhibits variations in the altitude dimension. This clearly demonstrates the correctness of the three-dimensional trajectory planning considered in this paper. Moreover, according to the results of Fig.\ref{fig4}, we can know that as the sensing time increases from 20s to 40s, the UAV spans a larger horizontal range. Correspondingly, the ground projections (red curves) expand, indicating broader coverage areas. The start and end markers further confirm that longer sensing durations result in extended flight paths, thereby providing greater measurement diversity for SAR imaging and communication tasks. Moreover, the UAV trajectories corresponding to different sensing durations exhibit distinct shapes. This indicates that each trajectory is obtained through an overall optimization process, rather than by simply extending a previously optimized path. In other words, the trajectory design adapts to the specified sensing time in a holistic manner, leading to re-optimized flight paths that balance altitude variation, horizontal coverage, and communication–sensing performance trade-offs.

\begin{figure}[htp]
	\centering
	{
		{\includegraphics[width=0.38\textwidth]{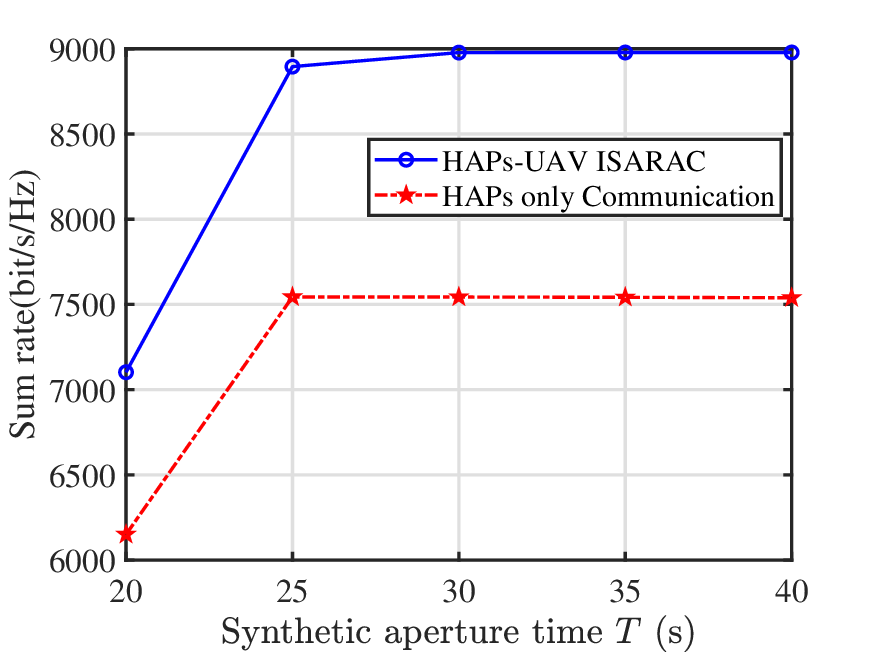}}}
	\caption{Performance comparison of the proposed ISARAC system and only communication system with different synthetic aperture time.}
	\label{fig5}
\end{figure}

Fig.\ref{fig5} presents a performance comparison between the proposed ISARAC system and the baseline system where the HAPs solely provides communication services. The results demonstrate that, under the specific simulation settings considered, the communication performance of the HAPs–UAV ISARAC system surpasses that of the communication-only case, which is consistent with the earlier analysis. Furthermore, it can be observed that the sum rate initially increases with the sensing duration. However, once the synthetic aperture time exceeds a certain threshold, further extensions yield negligible improvement. This saturation effect is attributed to constraint $C9$, which restricts the energy available for ISARAC operations on the HAPs, thereby capping the achievable sum rate.

\begin{figure}[htp]
	\centering
	{
		{\includegraphics[width=0.38\textwidth]{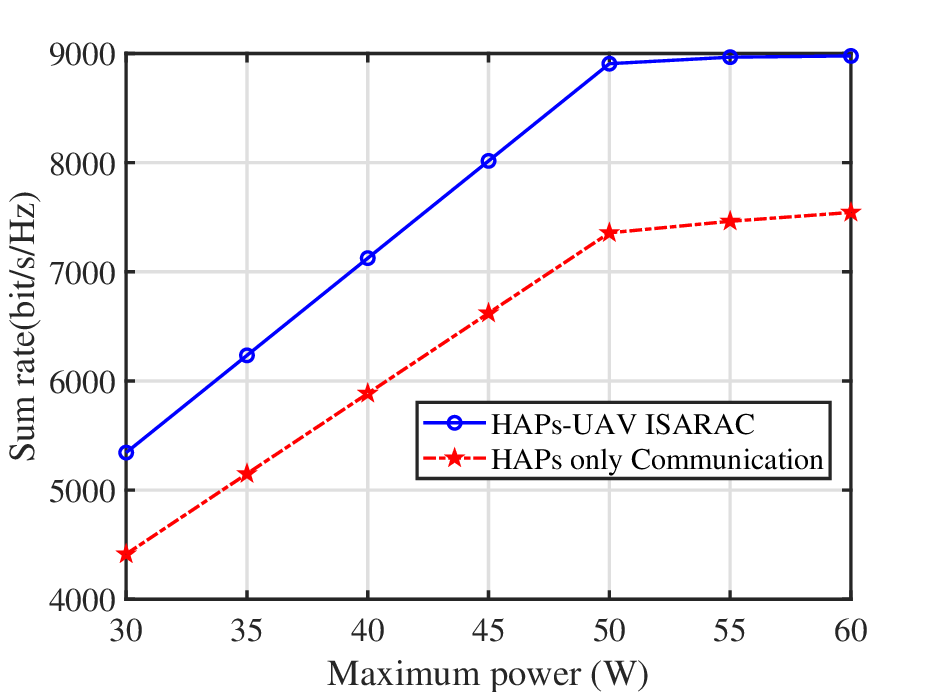}}}
	\caption{Performance comparison of the proposed ISARAC system and only communication system with different maximum transmit power.}
	\label{fig6}
\end{figure}

Fig.\ref{fig6} presents a performance comparison between the proposed ISARAC system and the communication-only system under different maximum transmit-power levels. It can be observed that as transmit power gradually increases, the sum rate of both systems improves, with the ISARAC system consistently maintaining a clear advantage across the entire range. 
In terms of trends, the sum rate grows rapidly with increasing $P_{\max}$ in the low-power region, primarily due to the significant improvement in the SNR. However, once the maximum transmit power exceeds 50 W, the performance gain begins to saturate, and further power increases yield only marginal improvements. This saturation effect arises mainly from two factors. First, the link quality approaches a near-saturation state at high power and second, additional resources are constrained by system-level limitations, such as total energy budget, preventing excess power from being efficiently converted into throughput gains.
These results suggest that simply increasing transmit power is not a sustainable strategy for performance improvement.



\begin{figure}[htp]
	\centering
	{
		{\includegraphics[width=0.38\textwidth]{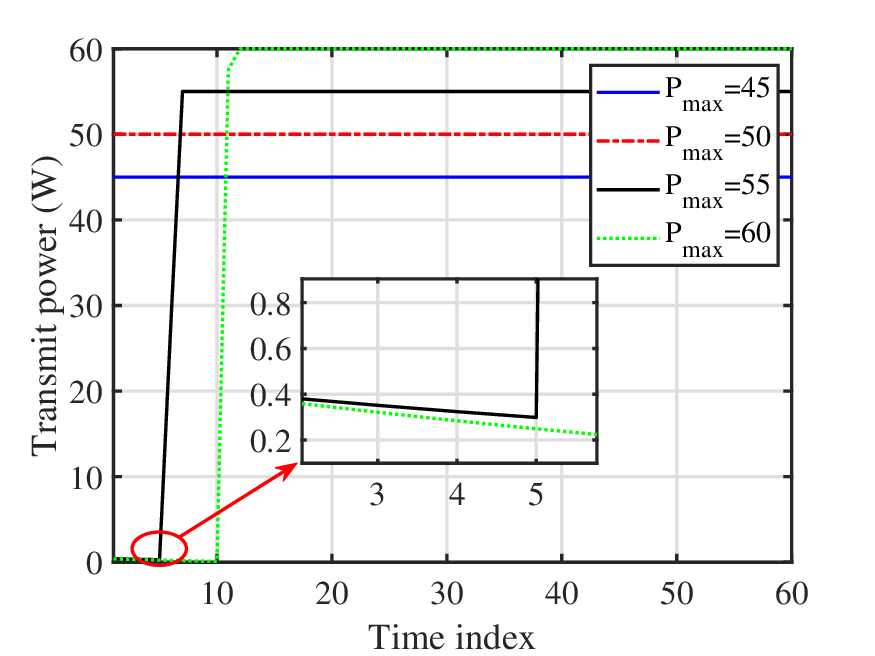}}}
	\caption{Transmit power vs time for varying $
P_{\max}
$.}
	\label{fig8}
\end{figure}

To further illustrate the role of constraints $C4$ and $C8$–$C9$ in transmit-power scheduling, Fig. \ref{fig8} presents the time evolution of the HAPs transmit power under different maximum transmit-power limits. As can be observed, when $P_{\max}$ is set to 45 W and 50 W, the transmit-power curve almost always adheres closely to the maximum power ceiling. This indicates that at lower maximum power levels, the transmit power is mainly governed by $C4$ (the SNR constraint) and $C8$ (the non-negativity and upper-bound condition on transmit power), with the system tending to fully exploit the available power in order to ensure link quality and satisfy the required SNR.
In contrast, when $P_{\max}$ increases to 55 W and 60 W, a significant difference arises. Although in principle higher instantaneous power could be provided, the total-energy budget constraint $C9$ prevents the HAPs from maintaining transmission at the peak power level throughout the mission duration. In other words, the system must dynamically regulate its transmit power over time to avoid prematurely depleting its limited energy resources. Under such conditions, transmit power is simultaneously constrained by $C4$, $C8$, and $C9$, with $C9$ becoming the key factor that determines whether the transmit power can be sustained near the peak level.
This phenomenon clearly reveals the interplay and trade-off among multiple constraints in power scheduling, namely at lower maximum power levels, the transmit-power profile is dominated by instantaneous constraints (SNR and peak-power bounds), whereas at higher maximum power levels, the system must strike a balance between achieving optimal instantaneous transmission quality and ensuring long-term energy sustainability. This highlights that in system design, both short-term performance and long-term energy endurance must be considered simultaneously in order to achieve robust and efficient HAPs mission operations.

\begin{figure}[htp]
	\centering
	{
		{\includegraphics[width=0.38\textwidth]{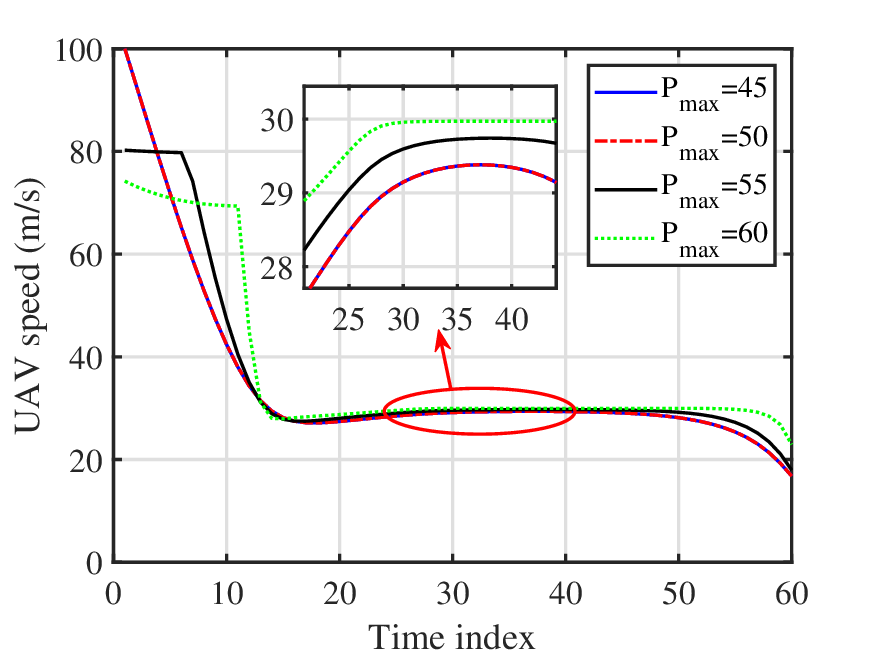}}}
	\caption{UAV speed vs time for varing $
P_{\max}
$.}
	\label{fig9}
\end{figure}

Fig. \ref{fig9} depicts the time evolution of UAV velocity under different maximum transmit-power constraints. It can be seen that variations in $P_{\max}$ exert only a marginal influence on the overall velocity profile, suggesting that transmit-power limits are not the dominant factor governing UAV motion in this scenario. Instead, the observed dynamics mainly reflect the intrinsic coupling between energy scheduling and UAV flight mechanics.
Across all cases, the UAV velocity exhibits a consistent decline–stabilization trend. In the initial phase, higher speed is maintained to support acceleration and attitude adjustments, followed by a gradual reduction and convergence toward a steady value as the flight enters a stable regime. This convergence indicates that the optimization framework effectively regulates velocity to balance propulsion energy with mission endurance.
Comparisons across different $P_{\max}$ further confirm that transmit-power constraints primarily induce minor numerical differences rather than altering the fundamental velocity pattern. Lower $P_{\max}$ values slightly reduce the UAV’s adjustment margin, while higher values yield small shifts in steady-state velocity, but the overall decline–stabilization trajectory remains unchanged.
This implies that, while transmit-power constraints strongly affect link quality and energy allocation, UAV velocity control is primarily shaped by mission design and long-term endurance considerations.

\subsection{Comparison With Single-Platform HAPs-Based ISARAC}
\label{subsec:single_haps_isarac}

To further evaluate the efficiency of the proposed dual-platform architecture, we introduce a single-platform HAPs-based ISARAC benchmark. 
In this benchmark, the HAPs transmits ISARAC waveforms to serve ground users and simultaneously receives the ground-reflected echoes for SAR sensing, without UAV-assisted passive reception. 
The HAPs follows the same circular trajectory and is subject to the same transmit-power and energy constraints as in the proposed scheme. 
Therefore, this benchmark provides a direct comparison between the proposed HAPs--UAV bistatic architecture and a single-platform ISARAC system under the same sensing and communication requirements.

Let $\mathbf{s}_n=[x_{s,n},y_{s,n},0]^T$ denote the ground scene center illuminated by the HAPs at time slot $n$, and define $d_{{\rm H},n}=\|\boldsymbol{p}_n-\mathbf{s}_n\|$. 
For the single-platform HAPs-based SAR sensing link, the received echo SNR can be expressed as
\begin{equation}
\mathrm{SNR}_{\rm H}(n)
=
\frac{P_t(n)G_tG_{\rm H}^{r}\lambda ^3\sigma _0A_p}
{\left(4\pi\right)^3d_{{\rm H},n}^{4}KT_bF_nB_rL_s},
\end{equation}
where $G_{\rm H}^{r}$ denotes the HAPs receive gain for SAR echo reception. 
For fairness, we set $G_{\rm H}^{r}=G_r$ in the simulations. 
The SNR requirement $\mathrm{SNR}_{\rm H}(n)\ge \mathrm{SNR}_{\min}$ yields the following sensing-induced transmit-power lower bound, namely
\begin{equation}
P_t(n)\ge L_{{\rm H},n},
\end{equation}
where
\begin{equation}
L_{{\rm H},n}
=
\frac{
\mathrm{SNR}_{\min}
\left(4\pi\right)^3
d_{{\rm H},n}^{4}
KT_bF_nB_rL_s
}
{
G_tG_{\rm H}^{r}\lambda ^3\sigma_0A_p
}.
\end{equation}

Accordingly, the single-platform HAPs-based ISARAC benchmark can be formulated as a sensing-constrained power-allocation problem. 
As detailed in Appendix~\ref{app:haps_isarac_baseline}, its optimal solution also admits a capped water-filling form, as
\begin{equation}
\label{eq60}
P_{{\rm H},t}^{\star}(n;\mu)
=
\left[
\mu-\frac{1}{a_{{\rm H},n}}
\right]_{L_{{\rm H},n}}^{P_{\max}},
\quad n=1,\ldots,N,
\end{equation}
where
\begin{equation}
a_{{\rm H},n}
=
\frac{G_tG_r^c\rho_0}
{d_{{\rm H},n}^{2}\tilde{\sigma}_0^2},
\end{equation}
and the water level $\mu$ is determined by the total energy constraint. 
If $\sum_{n=1}^{N}L_{{\rm H},n}\delta>E^{\rm s}$, the single-platform HAPs-based ISARAC benchmark is infeasible under the considered sensing requirement.

\begin{figure}[htp]
    \centering
    \subfigure[]{
        \includegraphics[width=0.43\textwidth]{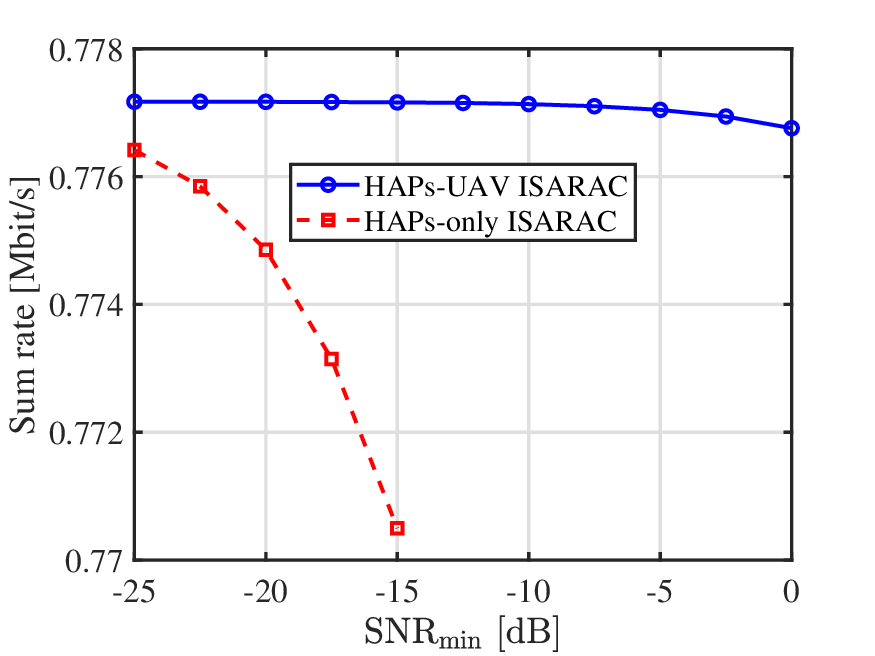}
    }
    \subfigure[]{
        \includegraphics[width=0.43\textwidth]{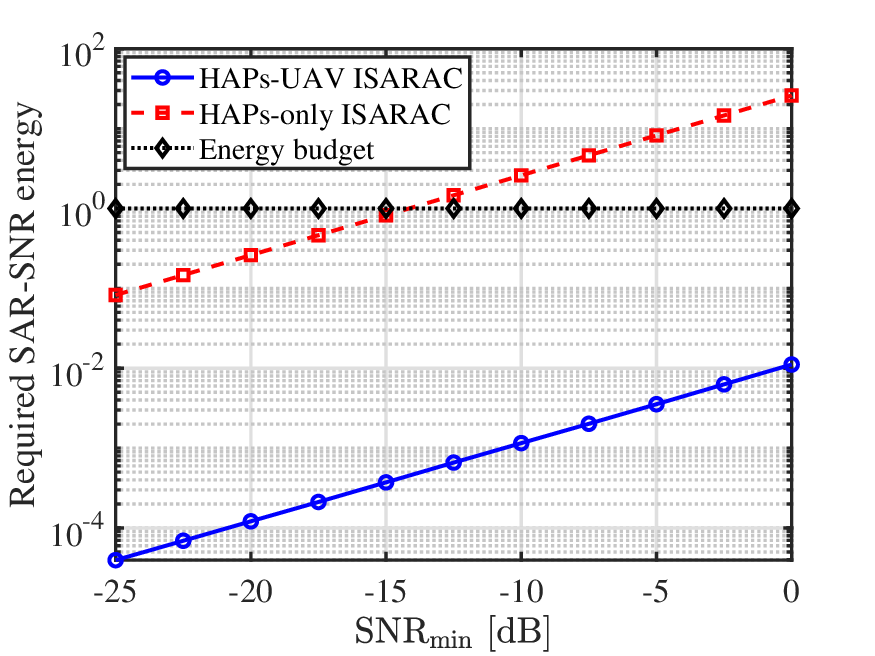}
    }
\caption{Sensing-constrained performance comparison between the proposed HAPs--UAV bistatic ISARAC architecture and the single-platform HAPs-based ISARAC benchmark under different SAR SNR thresholds. 
(a) Achievable sum rate under the SAR SNR constraint. 
(b) Normalized minimum transmit-energy floor required to satisfy the SAR SNR constraint, where values above one indicate infeasibility under the available ISARAC energy budget.}
    \label{fig:single_haps_compare}
\end{figure}

Fig.~\ref{fig:single_haps_compare} compares the proposed HAPs--UAV bistatic ISARAC architecture with the single-platform HAPs-based ISARAC benchmark under different SAR SNR thresholds. 
For this comparison, the SAR SNR threshold $\mathrm{SNR}_{\min}$ is varied over a prescribed range. 
For each value of $\mathrm{SNR}_{\min}$, the proposed AO-based algorithm is re-executed to jointly optimize the UAV trajectory and HAPs transmit-power allocation. 
After convergence, the achieved communication sum rate is recorded for the proposed HAPs--UAV bistatic ISARAC architecture. 
For the single-platform HAPs-based ISARAC benchmark, the HAPs trajectory and the illuminated ground scene are kept the same, and the corresponding sensing-constrained power-allocation problem is solved using the capped water-filling solution \eqref{eq60}. 
In Fig.~\ref{fig:single_haps_compare}(b), the plotted energy floor is defined as the minimum transmit energy required to satisfy the SAR SNR constraint, i.e., $\sum_{n=1}^{N}L_n\delta$, normalized by the available ISARAC energy budget $E^{\mathrm{s}}$, where $L_n$ denotes the sensing-induced lower bound on the transmit power. 
Therefore, a value larger than one indicates that the SAR SNR requirement cannot be satisfied under the given energy budget.

As shown in Fig.~\ref{fig:single_haps_compare}(a), the proposed HAPs--UAV bistatic ISARAC architecture maintains an almost stable sensing-constrained sum rate over a wide range of $\mathrm{SNR}_{\min}$ thresholds. 
In contrast, the single-platform HAPs-based ISARAC benchmark experiences a clear rate degradation as the SAR SNR requirement becomes more stringent and eventually becomes infeasible when the required sensing energy exceeds the available ISARAC energy budget. 
The reason for this behavior is further illustrated in Fig.~\ref{fig:single_haps_compare}(b). 
For the single-platform benchmark, the SAR echo experiences a long HAPs--ground--HAPs propagation path, leading to a rapidly increasing sensing-induced transmit-energy floor as the $\mathrm{SNR}_{\min}$ increases. 
By contrast, in the proposed HAPs--UAV bistatic architecture, the UAV acts as a low-altitude passive receiver and substantially shortens the target-to-receiver propagation path. 
As a result, the normalized minimum transmit energy required to satisfy the $\mathrm{SNR}_{\min}$ constraint remains far below the available energy budget, which explains why the communication sum rate of the proposed architecture is much less sensitive to the sensing threshold. 
Therefore, compared with the single-platform HAPs-based ISARAC benchmark, the proposed dual-platform design provides a wider feasible operating region and more efficiently balances persistent communication and SAR sensing requirements.

\section{Conclusion}
\label{sec5}
This paper investigates UAV 3D trajectory planning and resource allocation for a HAPs–UAV bistatic ISARAC architecture under imaging constraints, a total-energy constraint, and UAV dynamics constraints. Within this architecture, the HAPs transmits the ISARAC waveform, while the UAV collects ground-target echoes for SAR imaging and ground users access communication links via the HAPs, thereby realizing an ISARAC framework. To maximize the ground users’ sum rate, we formulated a joint optimization problem and addressed its inherent nonconvexity by developing AO framework. Specifically, a closed-form water-filling solution was derived for the power-allocation subproblem with fixed UAV states, while the UAV trajectory subproblem was reformulated through SCA/DC techniques into a tractable convex program. Simulation results confirmed the effectiveness of the proposed approach. 
Further comparison with a single-platform HAPs-based ISARAC benchmark showed that the proposed HAPs--UAV bistatic architecture significantly reduces the sensing-induced energy floor, thereby maintaining a more stable communication rate and a wider feasible operating region under stringent SAR SNR requirements.
Under stringent imaging thresholds and limited energy budgets, the algorithm achieved more favorable energy allocation and trajectory geometries between sensing and communications compared with baseline strategies, improving the sensing-constrained communication performance. Moreover, the results corroborated the theoretical mechanism that tightening the imaging SNR threshold reduces the degrees of freedom for communication-oriented power allocation, thereby causing the sum rate to monotonically decrease and approach a lower performance floor. Overall, this work provides a resource--trajectory design framework for low-altitude ISARAC platforms. 
 The practical implications of the adopted model simplifications are also noted. 
The HAPs energy model constrains only the energy available for ISARAC transmission, while practical payload energy is jointly affected by solar energy harvesting, battery charging/discharging, platform motion and station-keeping energy consumption, and transmit-power demand. 
The LoS-dominated free-space communication channel may also be affected by atmospheric attenuation, rain attenuation, shadowing, and small-scale fading, which can reduce the effective channel gain and achievable rate. 
Moreover, the adopted SNR and spatial-resolution constraints provide tractable SAR sensing metrics, but practical image quality further depends on the point spread function, peak sidelobe ratio, integrated sidelobe ratio, and motion-induced phase errors caused by UAV positioning and trajectory uncertainties. 
These factors may require more conservative power allocation and trajectory planning. 
Nevertheless, they mainly affect the available energy margin, link budget, imaging-quality margin, and feasible operating region, while the main sensing--communication tradeoff and optimization structure remain applicable. 
Future work will incorporate solar-aware HAPs energy dynamics, environment-dependent channel models, SAR image-quality metrics, robust trajectory design under UAV positioning errors, multi-UAV cooperation, and airspace-aware trajectory optimization.

\appendices
\section{Derivation of Eq.\eqref{eqq28}}
\label{refA}
In the first, the expression of $
\left\| \boldsymbol{p}_n-\mathbf{q}_{n}^{c} \right\| ^2
$ can be rewritten as 
\begin{equation}
    \label{}
    \left\| \boldsymbol{p}_n-\mathbf{q}_{n}^{c} \right\| ^2=(x_{p}^{n}-x_{n}^{c})^2+(y_{p}^{n}-y_{n}^{c})^2+H^2
\end{equation}

Accordingly, the gradient of $
g_{5}^{2}\left( \mathbf{q}_n \right) 
$ with respect to $\mathbf{q}_n$ can be expressed as


\begin{equation}
    \label{}
\nabla _{\mathbf{q}_n}g_{5}^{2}\left( \mathbf{q}_n \right) =\left[ \begin{matrix}
	\frac{\partial \left\| \boldsymbol{p}_n-\mathbf{q}_{n}^{c} \right\| ^2}{\partial x_n}&		\frac{\partial \left\| \boldsymbol{p}_n-\mathbf{q}_{n}^{c} \right\| ^2}{\partial y_n}&		\frac{\partial \left\| \boldsymbol{p}_n-\mathbf{q}_{n}^{c} \right\| ^2}{\partial z_n}\\
\end{matrix} \right] ^T
\end{equation}
with $
\frac{\partial g_{5}^{2}\left( \mathbf{q}_n \right) }{\partial x_n}
$ denotes the  partial derivative of $g_{5}^{2}\left( \mathbf{q}_n \right) $ with respect to $x_n$, namely 
\begin{equation}
    \label{}
    \begin{split}
\frac{\partial g_{5}^{2}\left( \mathbf{q}_n \right) }{\partial x_n}=&2(x_{n}^{c}-x_{p}^{n})\cdot \frac{\partial x_{n}^{c}}{\partial x_n}+2(y_{n}^{c}-y_{p}^{n})\cdot \frac{\partial y_{n}^{c}}{\partial x_n}
\\
=&2(x_{n}^{c}-x_{p}^{n})
    \end{split}
\end{equation}
Similarly, the partial derivative with respect to $y_n$ can be computed as 
\begin{equation}
    \label{}
    \begin{split}
\frac{\partial g_{5}^{2}\left( \mathbf{q}_n \right) }{\partial y_n}=&2(x_{n}^{c}-x_{p}^{n})\cdot \frac{\partial x_{n}^{c}}{\partial y_n}+2(y_{n}^{c}-y_{p}^{n})\cdot \frac{\partial y_{n}^{c}}{\partial y_n}
\\
=&2(y_{n}^{c}-y_{p}^{n})
    \end{split}
\end{equation}

Finally, the partial derivative with respect to $z_n$ yields
\begin{equation}
    \label{}
    \begin{split}
\frac{\partial g_{5}^{2}\left( \mathbf{q}_n \right) }{\partial z_n}=&2(x_{n}^{c}-x_{p}^{n})\cdot \frac{\partial x_{n}^{c}}{\partial z_n}+2(y_{n}^{c}-y_{p}^{n})\cdot \frac{\partial y_{n}^{c}}{\partial z_n}
\\
=&2\tan \eta \left[ (x_{n}^{c}-x_{p}^{n})\sin \gamma _n-(y_{n}^{c}-y_{p}^{n})\cos \gamma _n \right] 
    \end{split}
\end{equation}
By combining all the above results, we obtain the results \eqref{eqq28}.

\section{Single-Platform HAPs-Based ISARAC Benchmark}
\label{app:haps_isarac_baseline}

This appendix provides the formulation and solution of the single-platform HAPs-based ISARAC benchmark used for comparison in Section~\ref{subsec:single_haps_isarac}. 
In this benchmark, the HAPs simultaneously supports downlink communication and SAR sensing without UAV-assisted passive reception.

The HAPs--ground communication channel gain is given by
\begin{equation}
\rho_{{\rm H},n}=\frac{\rho_0}{d_{{\rm H},n}^{2}},
\end{equation}
and the corresponding achievable rate is
\begin{equation}
R_{\rm H}(n)
=
B\log\left(1+
\frac{P_t(n)G_tG_r^c\rho_{{\rm H},n}}
{\tilde{\sigma}_0^2}
\right).
\end{equation}
Define
\begin{equation}
a_{{\rm H},n}
=
\frac{G_tG_r^c\rho_0}
{d_{{\rm H},n}^{2}\tilde{\sigma}_0^2}.
\end{equation}
Then the achievable rate can be rewritten as
\begin{equation}
R_{\rm H}(n)=B\log\left(1+a_{{\rm H},n}P_t(n)\right).
\end{equation}

With the sensing-induced lower bound $L_{{\rm H},n}$ defined in Section~\ref{subsec:single_haps_isarac}, the single-platform HAPs-based ISARAC benchmark is formulated as
\begin{align}
\left(\mathcal{P}_{\rm H}\right):
\underset{\{P_t(n)\}}{\max}\quad
&\sum_{n=1}^{N}B\log\left(1+a_{{\rm H},n}P_t(n)\right)
\\
{\rm s.t.}\quad
&L_{{\rm H},n}\le P_t(n)\le P_{\max},\quad \forall n,
\\
&\sum_{n=1}^{N}P_t(n)\delta\le E^{\rm s}.
\end{align}
This problem is convex because the objective function is concave in $\{P_t(n)\}$ and all constraints are affine. 
Following the Proposition \ref{prop:cwf} again, the optimal transmit power is obtained as
\begin{equation}
P_{{\rm H},t}^{\star}(n;\mu)
=
\left[
\mu-\frac{1}{a_{{\rm H},n}}
\right]_{L_{{\rm H},n}}^{P_{\max}},
\quad n=1,\ldots,N,
\end{equation}
where $\mu$ is the water level. 
Define
\begin{equation}
S_{\rm H}(\mu)=
\sum_{n=1}^{N}
\left[
\mu-\frac{1}{a_{{\rm H},n}}
\right]_{L_{{\rm H},n}}^{P_{\max}}.
\end{equation}
If
\begin{equation}
\sum_{n=1}^{N}L_{{\rm H},n}\delta
\le
E^{\rm s}
\le
\sum_{n=1}^{N}P_{\max}\delta,
\end{equation}
there exists a water level $\mu$ such that
\begin{equation}
S_{\rm H}(\mu)=\frac{E^{\rm s}}{\delta}.
\end{equation}
If $E^{\rm s}\ge \sum_{n=1}^{N}P_{\max}\delta$, then $P_{{\rm H},t}^{\star}(n)=P_{\max}$ for all $n$. 
If $\sum_{n=1}^{N}L_{{\rm H},n}\delta>E^{\rm s}$, the benchmark is infeasible under the considered sensing requirement.

\bibliographystyle{IEEEtran}
\bibliography{ref}

@book{boyd2004convex,
  title={Convex optimization},
  author={Boyd, Stephen P and Vandenberghe, Lieven},
  year={2004},
  publisher={Cambridge university press}
}

@article{rodriguez2009bistatic,
  title={Bistatic {TerraSAR-X/F-SAR} spaceborne--airborne {SAR} experiment: description, data processing, and results},
  author={Rodriguez-Cassola, Marc and Baumgartner, Stefan V and Krieger, Gerhard and Moreira, Alberto},
  journal={IEEE Transactions on Geoscience and Remote Sensing},
  volume={48},
  number={2},
  pages={781--794},
  year={2009},
  publisher={IEEE}
}

@article{kawamoto2023hapsbased,
  title={HAPS-based interference suppression through null broadening with directivity control in space-air-ground integrated networks},
  author={Kawamoto, Yuichi and Matsushita, Akinori and Verma, Shikhar and Kato, Nei and Kaneko, Kazuma and Sata, Ayaka and Hangai, Masatake},
  journal={IEEE Transactions on Vehicular Technology},
  volume={72},
  number={12},
  pages={16098--16107},
  year={2023},
  publisher={IEEE}
}

@article{li2024uav,
  title={{UAV-RIS-aided} space-air-ground integrated network: Interference alignment design and DoF analysis},
  author={Li, Jingfu and Chen, Gaojie and Zhang, Tong and Feng, Wenjiang and Jiang, Weiheng and Quek, Tony QS and Tafazolli, Rahim},
  journal={IEEE Transactions on Wireless Communications},
  volume={23},
  number={9},
  pages={11678--11692},
  year={2024},
  publisher={IEEE}
}

@article{kawamoto2024interference,
  title={Interference suppression in {HAPS-based} space-air-ground integrated networks using a codebook-based approach},
  author={Kawamoto, Yuichi and Okawara, Yuto and Verma, Shikhar and Kato, Nei and Kaneko, Kazuma and Sata, Ayaka and Ochiai, Mari},
  journal={IEEE Transactions on Vehicular Technology},
  year={2024},
  publisher={IEEE}
}

@inproceedings{rodrigues2025weather,
  title={Weather Attenuation Dataset Generation Method for Prediction-based Control of Non-Terrestrial High-Frequency Wireless Networks},
  author={Rodrigues, Tiago Koketsu and Songsriboonsit, Norranat and Verma, Shikhar},
  booktitle={2025 IEEE VTS Asia Pacific Wireless Communications Symposium (APWCS)},
  pages={1--5},
  year={2025},
  organization={IEEE}
}

@article{zedini2024improving,
  title={Improving performance of integrated ground-{HAPS} FSO communication links with {MIMO} application},
  author={Zedini, Emna and Ata, Yal{\c{c}}{\i}n and Alouini, Mohamed-Slim},
  journal={IEEE Photonics Journal},
  volume={16},
  number={2},
  pages={1--14},
  year={2024},
  publisher={IEEE}
}

@article{dahrouj2023machine,
  title={Machine learning-based user scheduling in integrated satellite-HAPS-ground networks},
  author={Dahrouj, Hayssam and Liu, Shasha and Alouini, Mohamed-Slim},
  journal={IEEE Network},
  volume={37},
  number={2},
  pages={102--109},
  year={2023},
  publisher={IEEE}
}

@article{kawamoto2023traffic,
  title={Traffic-prediction-based dynamic resource control strategy in {HAPS-mounted MEC-assisted} satellite communication systems},
  author={Kawamoto, Yuichi and Takahashi, Masaki and Verma, Shikhar and Kato, Nei and Tsuji, Hiroyuki and Miura, Amane},
  journal={IEEE Internet of Things Journal},
  volume={11},
  number={8},
  pages={13824--13836},
  year={2023},
  publisher={IEEE}
}

@article{rockafellar2015convex,
  title={Convex analysis:(pms-28)},
  author={Rockafellar, Ralph Tyrell},
  year={2015},
  publisher={Princeton university press}
}

@article{cumming2005digital,
  title={Digital processing of synthetic aperture radar data},
  author={Cumming, Ian G and Wong, Frank H},
  journal={Artech house},
  volume={1},
  number={3},
  pages={108--110},
  year={2005},
  publisher={Boston}
}

@article{palomar2006tutorial,
  title={A tutorial on decomposition methods for network utility maximization},
  author={Palomar, Daniel P{\'e}rez and Chiang, Mung},
  journal={IEEE Journal on Selected Areas in Communications},
  volume={24},
  number={8},
  pages={1439--1451},
  year={2006},
  publisher={IEEE}
}

@book{hein2003processing,
  title={Processing of SAR data},
  author={Hein, Achim},
  year={2003},
  publisher={Springer}
}

@article{huang2025joint,
  title={Joint Trajectory and Resource Optimization for {HAPs-SAR} Systems with Energy-Aware Constraints},
  author={Huang, Bang and Park, Kihong and Pang, Xiaowei and Alouini, Mohamed-Slim},
  journal={arXiv preprint arXiv:2506.23248},
  year={2025}
}

@article{moccia2011spatial,
  title={Spatial resolution of bistatic synthetic aperture radar: Impact of acquisition geometry on imaging performance},
  author={Moccia, Antonio and Renga, Alfredo},
  journal={IEEE Transactions on Geoscience and Remote Sensing},
  volume={49},
  number={10},
  pages={3487--3503},
  year={2011},
  publisher={IEEE}
}

@article{dower2018bistatic,
  title={Bistatic {SAR}: Forecasting spatial resolution},
  author={Dower, William and Yeary, Mark},
  journal={IEEE Transactions on Aerospace and Electronic Systems},
  volume={55},
  number={4},
  pages={1584--1595},
  year={2018},
  publisher={IEEE}
}

@article{zeng2019energy,
  title={Energy minimization for wireless communication with rotary-wing {UAV}},
  author={Zeng, Yong and Xu, Jie and Zhang, Rui},
  journal={IEEE transactions on wireless communications},
  volume={18},
  number={4},
  pages={2329--2345},
  year={2019},
  publisher={IEEE}
}

@inproceedings{jirousek2023design,
  title={Design of a Synthetic Aperture Radar Instrument for a High-Altitude Platform},
  author={Jirousek, Matthias and Peichl, Markus and Anger, Simon and Dill, Stephan and Engel, Marius},
  booktitle={IGARSS 2023-2023 IEEE International Geoscience and Remote Sensing Symposium},
  pages={2045--2048},
  year={2023},
  organization={IEEE}
}

@article{belmekkiCellularNetworkSky2024,
	title = {Cellular {{Network From}} the {{Sky}}: {{Toward People-Centered Smart Communities}}},
	shorttitle = {Cellular {{Network From}} the {{Sky}}},
	author = {Belmekki, Baha Eddine Youcef and Aljohani, Abdulah Jeza and Althubaity, Saud A. and Harthi, Abdulhadi Al and Bean, Kevin and Aijaz, Adnan and Alouini, Mohamed-Slim},
	date = {2024},
	journal = {IEEE Open Journal of the Communications Society},
	volume = {5},
	pages = {1916--1936},
	issn = {2644-125X},
	doi = {10.1109/OJCOMS.2024.3378297},
	langid = {english}
}

@article{lyu2022joint,
  title={Joint maneuver and beamforming design for {UAV-enabled} integrated sensing and communication},
  author={Lyu, Zhonghao and Zhu, Guangxu and Xu, Jie},
  journal={IEEE Transactions on Wireless Communications},
  volume={22},
  number={4},
  pages={2424--2440},
  year={2022},
  publisher={IEEE}
}

@article{liu2024uav,
  title={{UAV} assisted integrated sensing and communications for internet of things: {3D} trajectory optimization and resource allocation},
  author={Liu, Zechen and Liu, Xin and Liu, Yuemin and Leung, Victor CM and Durrani, Tariq S},
  journal={IEEE Transactions on Wireless Communications},
  volume={23},
  number={8},
  pages={8654--8667},
  year={2024},
  publisher={IEEE}
}

@article{liu2024secure,
  title={Secure rate maximization for {ISAC-UAV} assisted communication amidst multiple eavesdroppers},
  author={Liu, Yuemin and Liu, Xin and Liu, Zechen and Yu, Yingfeng and Jia, Min and Na, Zhenyu and Durrani, Tariq S},
  journal={IEEE Transactions on Vehicular Technology},
  volume={73},
  number={10},
  pages={15843--15847},
  year={2024},
  publisher={IEEE}
}

@article{wang2025joint,
  title={Joint Beamforming and Trajectory Design for Multi-Antenna {UAV-Enabled ISAC}},
  author={Wang, Weidan and Liu, Xin and Liu, Zechen},
  journal={IEEE Transactions on Vehicular Technology},
  year={2025},
  publisher={IEEE}
}

@article{mao2024uav,
  title={{UAV-assisted }communications in {SAGIN-ISAC}: Mobile user tracking and robust beamforming},
  author={Mao, Weihao and Lu, Yang and Pan, Gaofeng and Ai, Bo},
  journal={IEEE Journal on Selected Areas in Communications},
  year={2024},
  publisher={IEEE}
}

@article{benaya2025aerial,
  title={Aerial {ISAC}: A {HAPS}-assisted integrated sensing, communications and computing framework for enhanced coverage and security},
  author={Benaya, Ahmed M and Hassan, Mohamed S and Ismail, Mahmoud H and Landolsi, Taha},
  journal={IEEE Transactions on Green Communications and Networking},
  year={2025},
  publisher={IEEE}
}

@article{kirik2025isac,
  title={An {ISAC-Assisted} Beam Alignment Design for {HAP}-Based 6G Flying Ad-Hoc Networks},
  author={Kirik, Muhammet and Afeef, Liza and Arslan, H{\"U}Seyin},
  journal={IEEE Open Journal of the Communications Society},
  year={2025},
  publisher={IEEE}
}

@article{huang2025design,
  title={Design of frequency index modulated waveforms for integrated {SAR} and communication on high-altitude platforms {(HAPs)}},
  author={Huang, Bang and Ahmed, Sajid and Alouini, Mohamed-Slim},
  journal={IEEE Transactions on Communications},
  year={2025},
  publisher={IEEE}
}

@article{zhou2025energy,
  title={Energy-Efficient Target Area Imaging for {UAV-SAR-Based ISAC}: Beamforming Design and Trajectory Optimization},
  author={Zhou, Jiayi and Zhang, Xiangyin and Qin, Kaiyu and Yang, Feng and Wang, Libo and Song, Deyu},
  journal={Remote Sensing},
  volume={17},
  number={12},
  pages={2082},
  year={2025},
  publisher={MDPI}
}

@ARTICLE{Zhou2025JointTrajectory,
  author={Zhou, Jiayi and Zhang, Xiangyin and Qin, Kaiyu and Yang, Feng and Wang, Libo},
  journal={IEEE Access}, 
  title={Joint Trajectory and Power Optimization for {UAV-SAR Based ISAC} System}, 
  year={2025},
  volume={13},
  number={},
  pages={143465-143478},
  doi={10.1109/ACCESS.2025.3598359}}

@article{hu2022trajectory,
  title={Trajectory planning of cellular-connected {UAV} for communication-assisted radar sensing},
  author={Hu, Shuyan and Yuan, Xin and Ni, Wei and Wang, Xin},
  journal={IEEE Transactions on Communications},
  volume={70},
  number={9},
  pages={6385--6396},
  year={2022},
  publisher={IEEE}
}

@article{zhang2025design,
  title={Design of 3D Beamforming and Deployment Strategies for {ISAC-based HAPS} Systems},
  author={Zhang, Xue and Huang, Bang and Alouini, Mohamed-Slim},
  journal={arXiv preprint arXiv:2506.11294},
  year={2025}
}

@inproceedings{moro2024exploring,
  title={Exploring {ISAC} technology for {UAV SAR} imaging},
  author={Moro, Stefano and Linsalata, Francesco and Manzoni, Marco and Magarini, Maurizio and Tebaldini, Stefano},
  booktitle={ICC 2024-IEEE International Conference on Communications},
  pages={1582--1587},
  year={2024},
  organization={IEEE}
}

@article{kanani2025haps,
  title={{HAPS-ISAC}: Enhancing Sensing and Communication in 6G Networks with Advanced {MIMO} Beamforming},
  author={Kanani, Parisa and Omidi, Mohammad Javad and Modarres-Hashemi, Mahmoud and Yanikomeroglu, Halim},
  journal={IEEE Open Journal of the Communications Society},
  year={2025},
  publisher={IEEE}
}

@article{zhang2024joint,
  title={A joint {UAV} trajectory, user association, and beamforming design strategy for {multi-UAV assisted ISAC} systems},
  author={Zhang, Ruizhi and Zhang, Ying and Tang, Rui and Zhao, Huapeng and Xiao, Qing and Wang, Chenye},
  journal={IEEE Internet of Things Journal},
  year={2024},
  publisher={IEEE}
}

@article{bayessa2024joint,
  title={Joint {UAV} deployment and precoder optimization for multicasting and target sensing in {UAV-assisted ISAC} networks},
  author={Bayessa, Gezahegn Abdissa and Chai, Rong and Liang, Chengchao and Jain, Deepak Kumar and Chen, Qianbin},
  journal={IEEE Internet of Things Journal},
  year={2024},
  publisher={IEEE}
}

@ARTICLE{Yan2025uav,
  author={Yan, Shaoqiang and Luo, Hongliang and Yang, Ping and Zhao, Jianwei and Gao, Feifei},
  journal={IEEE Transactions on Wireless Communications}, 
  title={{UAV} Trajectory Monitoring for Integrated Sensing and Communications System}, 
  year={2025},
  volume={},
  number={},
  pages={1-1},
  doi={10.1109/TWC.2025.3598799}}

@article{jiang2025integrated,
  title={Integrated sensing and communication for low altitude economy: Opportunities and challenges},
  author={Jiang, Yihang and Li, Xiaoyang and Zhu, Guangxu and Li, Hang and Deng, Jing and Han, Kaifeng and Shen, Chao and Shi, Qingjiang and Zhang, Rui},
  journal={IEEE Communications Magazine},
  year={2025},
  publisher={IEEE}
}

@ARTICLE{Mozaffari2016UnmannedAerialVehicle,
  author={Mozaffari, Mohammad and Saad, Walid and Bennis, Mehdi and Debbah, Mérouane},
  journal={IEEE Transactions on Wireless Communications}, 
  title={Unmanned Aerial Vehicle With Underlaid Device-to-Device Communications: Performance and Tradeoffs}, 
  year={2016},
  volume={15},
  number={6},
  pages={3949-3963},
  doi={10.1109/TWC.2016.2531652}}

@ARTICLE{Lyu2017PlacementOptimization,
  author={Lyu, Jiangbin and Zeng, Yong and Zhang, Rui and Lim, Teng Joon},
  journal={IEEE Communications Letters}, 
  title={Placement Optimization of {UAV}-Mounted Mobile Base Stations}, 
  year={2017},
  volume={21},
  number={3},
  pages={604-607},
  doi={10.1109/LCOMM.2016.2633248}}

@article{wu2019fundamental,
  title={Fundamental trade-offs in communication and trajectory design for {UAV-enabled} wireless network},
  author={Wu, Qingqing and Liu, Liang and Zhang, Rui},
  journal={IEEE Wireless Communications},
  volume={26},
  number={1},
  pages={36--44},
  year={2019},
  publisher={IEEE}
}

@article{wu2018capacity,
  title={Capacity characterization of UAV-enabled two-user broadcast channel},
  author={Wu, Qingqing and Xu, Jie and Zhang, Rui},
  journal={IEEE Journal on Selected Areas in Communications},
  volume={36},
  number={9},
  pages={1955--1971},
  year={2018},
  publisher={IEEE}
}

@ARTICLE{DengYu2024TheHighResolutionSynthetic,
  author={Deng, Yunkai and Yu, Weidong and Wang, Pei and Xiao, Dengjun and Wang, Wei and Liu, Kaiyu and Zhang, Heng},
  journal={IEEE Geoscience and Remote Sensing Magazine}, 
  title={The High-Resolution Synthetic Aperture Radar System and Signal Processing Techniques: Current progress and future prospects [Review Papers]}, 
  year={2024},
  volume={12},
  number={4},
  pages={169-189},
  doi={10.1109/MGRS.2024.3456444}}

@ARTICLE{WuJiang2024AIEnhancedIntegrated,
  author={Wu, Nan and Jiang, Rongkun and Wang, Xinyi and Yang, Lyuxiao and Zhang, Kecheng and Yi, Wenqiang and Nallanathan, Arumugam},
  journal={IEEE Communications Magazine}, 
  title={{AI}-Enhanced Integrated Sensing and Communications: Advancements, Challenges, and Prospects}, 
  year={2024},
  volume={62},
  number={9},
  pages={144-150},
  doi={10.1109/MCOM.001.2300724}}

@ARTICLE{Zeng2016Wirelesscommunications,
  author={Zeng, Yong and Zhang, Rui and Lim, Teng Joon},
  journal={IEEE Communications Magazine}, 
  title={Wireless communications with unmanned aerial vehicles: opportunities and challenges}, 
  year={2016},
  volume={54},
  number={5},
  pages={36-42},
  doi={10.1109/MCOM.2016.7470933}}

@article{rago2025innovative,
  title={Innovative Multi-Layer Approaches for 6G Integrated Terrestrial And Non-Terrestrial Networks},
  author={Rago, Arcangela and Guidotti, Alessandro and Amatetti, Carla and Sambo, Nicola and Morosi, Simone and Sacchi, Claudio and Matera, Francesco and Settembre, Marina and Piro, Giuseppe and Coralli, Alessandro Vanelli and others},
  journal={IEEE Communications Standards Magazine},
  year={2025},
  publisher={IEEE}
}

@INPROCEEDINGS{JirousekPeichl2024DLR,
	author={Jirousek, Matthias and Peichl, Markus and Anger, Simon and Dill, Stephan and Limbach, Markus},
	booktitle={EUSAR 2024; 15th European Conference on Synthetic Aperture Radar}, 
	title={The {DLR} High Altitude Platform Synthetic Aperture Radar Instrument {HAPSAR}}, 
	year={2024},
	volume={},
	number={},
	pages={1244-1248},
address={Munich, Germany},
	doi={}}

@book{richards2005fundamentals,
  title={Fundamentals of radar signal processing},
  author={Richards, Mark A and others},
  volume={1},
  year={2005},
  publisher={Mcgraw-hill New York}
}

@article{Nabi2025JointOffloading,
  author={Nabi, Ahmadun and Moh, Sangman},
  journal={IEEE Transactions on Mobile Computing}, 
  title={Joint Offloading Decision, User Association, and Resource Allocation in Hierarchical Aerial Computing: Collaboration of {UAVs} and {HAP}}, 
  year={2025},
  volume={},
  number={},
  pages={1-17},
  doi={10.1109/TMC.2025.3548668}
}

@article{Wu2025MultiHAP,
 author={Wu, Wenhao and Feng, Wei and Fang, Yi and Lin, Zhijian and Lu, Xiaoqiang},
  title   = {Multi-{HAP}-Assisted Computation Offloading in Space–Air–Ground–Sea Integrated Network},
  journal = {IEEE Internet of Things Journal},
  year    = {2025},
  volume={},
  number={},
  pages={1-1},
  doi     = {10.1109/JIOT.2025.3548156}
}

@ARTICLE{Yu2024MECNOMA,
  author={Yu, Xiangbin and Zhang, Xinyi and Rui, Yun and Wang, Kezhi and Dang, Xiaoyu and Guizani, Mohsen},
  journal={IEEE Journal on Selected Areas in Communications}, 
  title={Joint Resource Allocations for Energy Consumption Optimization in {HAPS-Aided MEC-NOMA} Systems}, 
  year={2024},
  volume={42},
  number={12},
  pages={3632-3646},
  doi={10.1109/JSAC.2024.3459084}}

@article{Ali2024PowerLEO,
    author={Ali, Zain and Rezki, Zouheir and Alouini, Mohamed-Slim},
  title   = {Optimizing Power Allocation in {HAPS}-Assisted {LEO} Satellite Communications},
  journal = {IEEE Transactions on Machine Learning in Communications and Networking},
year={2024},
  volume={2},
  number={},
  pages={1661-1677},
  doi={10.1109/TMLCN.2024.3491054}}

@article{abbasi2024haps,
  title={{HAPS} for {6G} networks: Potential use cases, open challenges, and possible solutions},
  author={Abbasi, Omid and Yadav, Animesh and Yanikomeroglu, Halim and {\DJ}{\`a}o, Ngọc-D{\~u}ng and Senarath, Gamini and Zhu, Peiying},
  journal={IEEE Wireless Communications},
  volume={31},
  number={3},
  pages={324--331},
month={June},
  year={2024},
  publisher={IEEE}
}

@ARTICLE{Javed2025SystemDesign,
  author={Javed, Sidrah and Alouini, Mohamed-Slim},
  journal={IEEE Transactions on Wireless Communications}, 
  title={System Design and Parameter Optimization for Remote Coverage From {NOMA}-Based High-Altitude Platform Stations {(HAPS)}}, 
  year={2025},
  volume={24},
  number={2},
  pages={1387-1400},
  doi={10.1109/TWC.2024.3508872}}

@inproceedings{lahmeri2022trajectory,
  title={Trajectory and resource optimization for UAV synthetic aperture radar},
  author={Lahmeri, Mohamed-Amine and Ghanem, Walid and Knill, Christina and Schober, Robert},
  booktitle={2022 IEEE Globecom Workshops (GC Wkshps)},
  pages={897--903},
  year={2022},
  organization={IEEE}
}

@misc{grant2014cvx,
  title={{CVX}: {Matlab} software for disciplined convex programming, version 2.1},
  author={Grant, Michael and Boyd, Stephen},
  year={2014}
}

@article{javed2023interdisciplinary,
  title={An Interdisciplinary Approach to Optimal Communication and Flight Operation of High-Altitude Long-Endurance Platforms},
  author={Javed, Sidrah and Alouini, Mohamed-Slim and Ding, Zhiguo},
  journal={IEEE Transactions on Aerospace and Electronic Systems},
  year={2023},
  publisher={IEEE}
}
\end{document}